\newtheorem{theorem}{Theorem}
\newtheorem{definition}{Definition}
\newtheorem{mechanism}{Mechanism}
\newtheorem{lemma}{Lemma}
\newcommand{\E}{{\rm I\kern-.3em E}}
\newcommand{\Var}{\mathrm{Var}}
\begin{document}

\title{Strategic Learning Approach for Deploying UAV-provided Wireless Services 
}

\author{Xinping~Xu,
        Lingjie~Duan,~\IEEEmembership{Senior~Member,~IEEE,}
        and~Minming~Li,~\IEEEmembership{Senior~Member,~IEEE}
\thanks{\quad\em L.~Duan and X.~Xu were supported by the Singapore Ministry of Education Academic Research Fund Tier 2 under Grant MOE2016-T2-1-173.
M.~Li was partially supported by NNSF of China under Grant No. 11771365, and  by Project No. CityU 11200518 from Research Grants Council of HKSAR.
Part of work has been presented in the 16th IEEE International Symposium on Modeling and Optimization in Mobile, Ad Hoc and Wireless Networks (WiOpt 2018), Shanghai, China, May 7-11, 2018 \cite{xu2018uav}. (\emph{Corresponding author: X.~Xu.})
}
\thanks{\quad\em X.~Xu and L.~Duan are with Engineering Systems and Design Pillar, Singapore University of Technology and Design, Singapore.
E-mail:{xinping\_xu@mymail.sutd.edu.sg}; {lingjie\_duan@sutd.edu.sg}}.
\thanks{\em M.~Li is with Department of Computer Science, City University of Hong Kong, Hong Kong SAR. Email: minming.li@cityu.edu.hk}
}

\IEEEtitleabstractindextext{%
\begin{abstract}
Unmanned Aerial Vehicle (UAV) have
emerged as a promising technique to rapidly provide wireless
services to a group of mobile users simultaneously.
		The paper aims to address a challenging issue that each user is selfish and may misreport his location or preference for changing the optimal UAV location to be close to himself.
		Using algorithmic game theory, we study how to determine the final location of a UAV in the 3D space, by ensuring all selfish users' truthfulness in reporting their locations for learning purpose.
		To minimize the social service cost in this UAV placement game, we design strategyproof mechanisms with the approximation ratios, when comparing to the social optimum.
		We also study the obnoxious UAV placement game to maximally keep their social utility, where each incumbent user may misreport his location to keep the UAV away from him.
		Moreover, we present the dual-preference UAV placement game by considering the coexistence of the two groups of users above, where users can misreport both their locations and preference types (favorable or obnoxious) towards the UAV.
Finally, we  extend the three games above to include multiple UAVs and design strategyproof mechanisms with provable approximation ratios.
\end{abstract}
\begin{IEEEkeywords}
Algorithmic game theory, approximation ratio, strategyproof mechanism, unmanned aerial vehicle.
\end{IEEEkeywords}}

\maketitle
\IEEEpeerreviewmaketitle



\section{Introduction}\label{section_introduction}

\IEEEPARstart{F}{uture} development of unmanned aerial vehicles (UAVs) expects each UAV to be intelligent enough to learn
 and operate independently  without intervention  of human controllers.
As a hot topic of artificial intelligence, algorithmic game theory helps design such a UAV to strategically interact with its potential customers in the first place and learn their private information before flying to provide customized services \cite{ernest2016genetic}.
Recently, in the field of wireless communications, the use of UAVs as flying
cell sites becomes a promising technique to solve the coverage problem
of territorial wireless networks \cite{mozaffari2016efficient}.
Traditional base stations are deployed at fixed locations on the ground for a long term
by catering to the average traffic load in the two-dimensional ground area, while flying UAVs do not have such constraint in
space or time for deployment.
Owing to their agility and mobility, UAVs can be quickly deployed as alternatives to meet time-varying traffic load.
Major wireless carriers such as AT\&T started to use UAVs to opportunistically boost wireless coverage for crowds in big concerts and sports, where people request good wireless services to continuously post their selfies and videos online \cite{drones-web}.
Moreover, UAVs can be rapidly deployed in events of disasters to enable air-to-ground communications when territorial base stations fail to work.
Verizon launched an exercise in mid 2017 to deploy UAVs to Cape May, New Jersey, and provide emergency crews facing hurricane disaster with airborne 4G  LTE connectivity \cite{pressman}.

To fully reap
the benefits of UAV-provided wireless services,
one must determine the final UAV hovering location to best serve a group of target users in the two-dimensional (2D) geographical ground.
As the number of UAVs is small compared to the target user size, the final UAV position needs to balance all target users' different locations and preferences to optimize their social benefit.
Such a problem has been recently investigated in the literature by assuming that the UAV knows the real locations or at least the distribution of mobile users upon deployment (e.g., \cite{6863654,7888557,zhang2018fast,pan2012cooperative,wang2019dynamic}).
For example, \cite{6863654} aimed to maximize the UAV's wireless coverage on the ground, by considering the air-to-ground signal propagation feature.
\cite{7888557} improved the energy-efficiency of UAV communication with ground users by designing the UAV's trajectory.
\cite{zhang2018fast} studied how to minimize the delay of deploying UAVs till providing the full wireless coverage in the worst scenario.
\cite{pan2012cooperative} studied the cooperation among intelligent vehicles in term of link scheduling.
\cite{wang2019dynamic} further studied how a UAV should allocate and price its limited capacity for serving a group of users on the way, by assuming public user arrival and preference distributions.
Unlike these works, we aim to study the optimal UAV placement without knowing any user's location or distribution information beforehand, by strategically learning from the selfish users themselves.

In practice, it is difficult for a UAV to track users' locations in real time, and traditional user positioning techniques require multiple base stations' continuous help \cite{901174,gu2009survey}.
When requiring a UAV's help, however, ground network infrastructure is often congested and may even fail to work, which makes it difficult for a UAV to self-track users' locations upon deployment  \cite{kos2006mobile}.
It is more desirable for the UAV to directly interact with users for learning their own locations  for the optimal deployment.
Though appealing, this approach is vulnerable
given many users in practice are selfish and may misreport their private information.
Selfish users only care about their own service benefits and prefer the closest UAV location to themselves.
They may  not report their true locations to help determine the optimal UAV placement for best serving the crowd.
Consider an illustrative uplink communication example where we deploy a UAV to a hovering point on a line interval $[0,4]$ for serving user $1$ at location $x_1=0$ and user $2$ at location $x_2=2$ simultaneously.
Each user prefers the final UAV location to be as close to his own location as possible to obtain high signal-to-noise-ratio or save his transmission power.
If the two users report their locations truthfully, the UAV chooses to locate at the mean of the two users' locations (i.e., optimal UAV location $x=1$).
However, if user $2$ misreports his location from $x_2=2$ to  $x_2'=4$, then mean UAV location changes to $x=2$ which is the closest to user $2$.
For optimizing the wireless service provision, in this paper, we use algorithmic game theory to investigate how the UAV should interact strategically with selfish users and learn their true locations and preferences.

Besides the favorable UAV placement game, we also study the obnoxious UAV placement game.
As the new UAV facility may interfere with another group of incumbent (adverse) users in the same space, we want to best control the interference and maximally keep these users' social utility when the UAV determines its placement position.
In this game, the UAV also requires all such users to report their locations for determining the UAV location, where a user may misreport his location to mislead the final UAV location to be further away from his true location and reduce interference from the UAV.
Moreover, as both the UAV's facility users (who prefer to be close to the UAV) and adverse users (who prefer to be far away from the UAV) may coexist at the same time, we want to reach a good balance between the positive and negative effects on the two diverse user groups when locating the UAV in the dual-preference UAV placement game. We still require strategyproof (truthful) mechanism design for the strategic UAV's learning interaction with selfish users to ensure all users' truthfulness in reporting their locations and even preference types.


The paper's key novelty and main contributions are summarized as
follows.

\begin{itemize}
\item \emph{Novel UAV placement  games for strategic learning of users' private information:}
To our best knowledge, our paper is the first to propose and analyze UAV placement games through strategically learning selfish users' locations and service preferences.
We consider a challenging scenario where mobile users can purposely hide their current locations and preferences from the UAV.
Using algorithmic game theory, we completely study a UAV placement game for serving facility users, an obnoxious UAV placement game for protecting adverse users, and a dual-preference UAV placement game for handling both groups of users, where users are selfish and may misreport their locations to mislead the UAV placement. We aim to design strategyproof mechanisms with provable approximation ratios in these three games to ensure users' truthful location reporting and optimize the social cost or utility.

\item \emph{Mechanism design for the UAV placement game:}
In Section \ref{sectionflg}, we propose two strategyproof mechanisms such that any user's misreporting of his locations can only increase his service cost.
Especially, we design the weighted median strategyproof mechanism for the strategic UAV-user learning interaction. with approximation ratio $2^{(3\alpha-4)/2}$, when comparing to the social optimal cost.
This worst-case result is robust no matter which
distribution users' locations follow.
Besides the worst-case analysis of the proposed strategyproof mechanisms, we also analyze the empirical performances of the two mechanisms and prove that they converge to the social optimum as the number of users becomes large, given users' locations following any symmetric distributions.

\item \emph{Mechanism design for the obnoxious UAV placement game:}
In Section \ref{sectionoflg}, we consider the opposite problem of locating an obnoxious UAV.
Each (adverse) user now attempts to stay far away from the UAV to reduce its received interference, by misreporting his location. Our target is to design a strategyproof mechanism for the UAV placement  and maximally keep the social utility of such users. Accordingly, we design a strategyproof mechanism with approximation ratio $5\times 2^{(\alpha-2)/2}$. Our empirical analysis further shows that the proposed mechanism converges to the social optimum as the number of users becomes large, given users' locations following asymmetric distributions.

\item \emph{Mechanism design for the dual-preference UAV placement game:}
In Section \ref{sectiondpflg}, we study the more general case of the dual-preference UAV placement game by considering the co-existence of both facility users and adverse users.
Besides locations, we further allow users to misreport their preference types (i.e., favorable or obnoxious) in the strategic UAV-user interaction.
We design a strategyproof mechanism with approximation ratio $2^{3\alpha/2}$ and validate the empirical result that it converges to the social optimum as the number of users becomes large, given users' locations following any asymmetric distributions
and the number of adverse users is larger than the number of facility users.

\item \emph{Mechanism design for multi-UAV placement game:}
In Section \ref{section_multiple}, we extend our three UAV placement games  in Sections \ref{sectionflg}-\ref{sectiondpflg} by including more than one UAV.
In the dual-preference placement game with two UAVs, we further allow each user to have diverse and hidden preferences over different UAVs and design a strategyproof  mechanism with approximation ratio
When there is an arbitrary number $k$ of UAVs to deploy, we also extend our mechanism design in the favorable and obnoxious placement games.
\end{itemize}



The organization of the paper is shown as follows.
In Section \ref{section_system}, we describe the mathematical models of one-UAV placement and obnoxious one-UAV placement game;
in Section \ref{sectionflg} of one-UAV placement game, we propose strategyproof Mechanisms \ref{m1} and \ref{m2} and analyze the empirical performances of them;
in Section \ref{sectionoflg} of obnoxious one-UAV placement game, we propose strategyproof Mechanisms \ref{m4} and \ref{m1.1}; 
in Section \ref{sectiondpflg} of dual-preference UAV placement game, we propose strategyproof Mechanism \ref{m7}; 
in Section \ref{section_multiple} of multiple-UAV placement game, we propose strategyproof Mechanisms \ref{m110} and \ref{m10}.

\subsection{Related work}
In the research of algorithmic game theory, there are some studies on the generic facility location game and strategyproof mechanisms to prevent users from misreporting locations.
Such mechanisms are simply based on users' location reports and are easy to implement, without using complicated schemes such as location-based pricing and billing.
For example, \cite{ProcacciaandTennenholtz} proposed median strategyproof mechanisms with provable approximation ratios on a one-dimensional line, which gives us some inspiration of proposing our strategyproof mechanisms in the UAV placement game.
In the obnoxious facility location game, the mechanism design for
the objective of maximizing total users' utility was first studied by \cite{cheng2011mechanisms}. \cite{ibara2012characterizing} characterized strategyproof mechanisms with
exactly two candidates in the general metric and showed that there exists
a lower bound $3$ of strategyproof mechanism.
\cite{zou2015facility} and \cite{feigenbaum2015strategyproof} investigated the properties of the one facility location game with dual-preference.
\cite{sui2013analysis} showed mechanism design problem for users over multi-dimensional domains when multiple facilities can be chosen.


Such works focus on one facility placement in one-dimension, while the real UAV placement is in 3D
and may include more than one UAV facility.
Further, our paper models users' heterogeneity in their service sensitivities (weights) and the line-of-sight air-to-ground propagation in wireless communications.
Such unique wireless feature and user heterogeneity translate to a new problem objective and require
new methods in designing the strategyproof
mechanisms and proving approximation ratios.

\section{System Model and Problem Formulation}\label{section_system}

\begin{figure}[t]
	\centerline{\includegraphics[width=8.5cm]{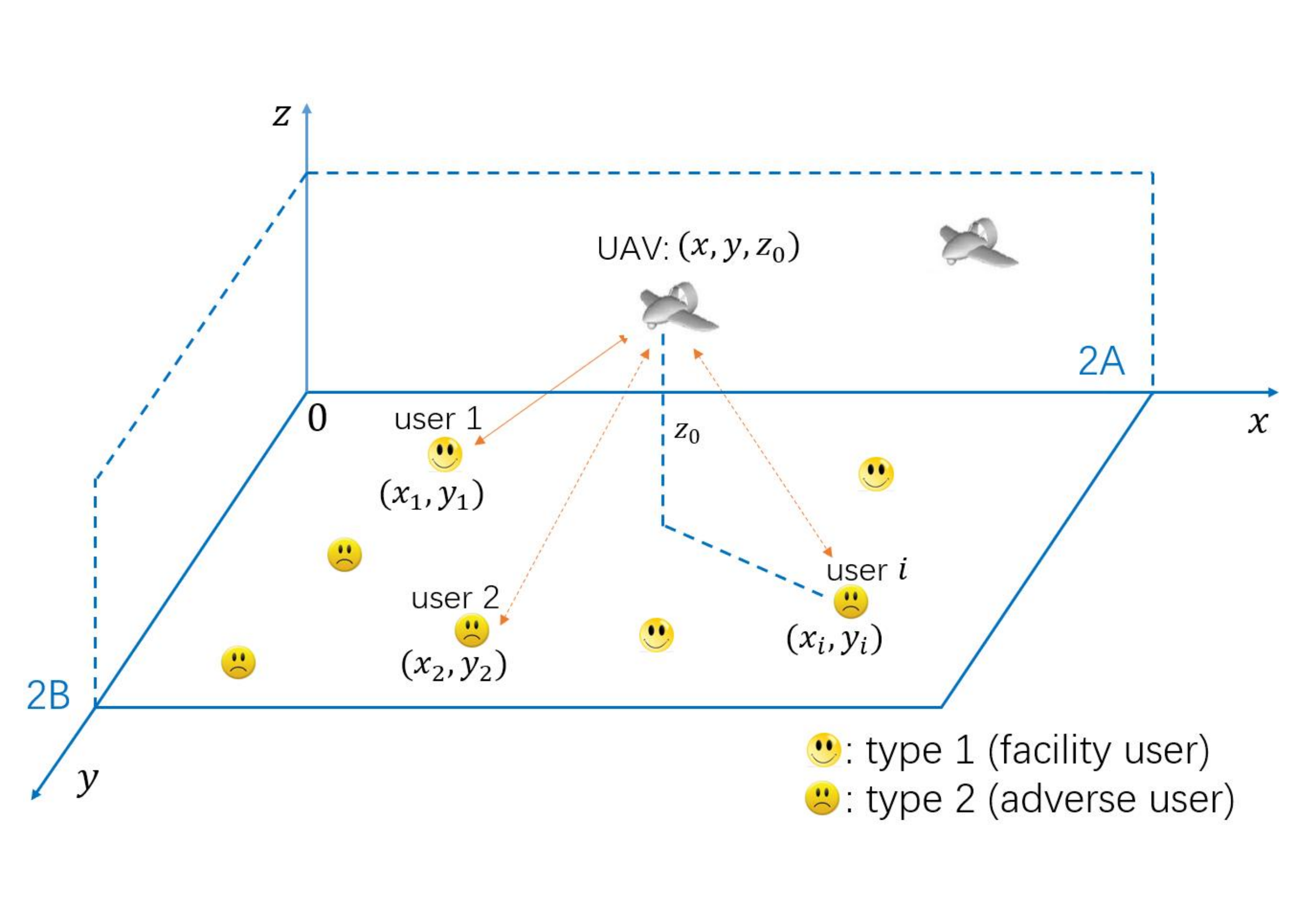}}
		\caption{System model about a UAV's placement in the 3D space. There are generally two types of users with dual preferences: type 1 (users of the UAV facility) and type 2 (adverse users experiencing UAV's interference). In the UAV placement game in Section \ref{sectionflg}, all users are of type 1;  in the obnoxious UAV placement  game in Section \ref{sectionoflg}, all users are of type 2; and in the dual-preference UAV placement game in Section \ref{sectiondpflg}, both types of users coexist.}
\label{fig1}
\end{figure}

Let $N = \{1,2,\cdots,n\}$ be the set of users that are located in the 2D space $I^2$. Without loss of generality, we suppose  $I^2$ is a finite rectangle $[0,2A]\times [0,2B]$ containing all $n$ users as shown in Fig.\ref{fig1}. The real location of user $i\in N$ is $ (x_i,y_i)\in I^2$. We denote $ \textbf x =(x_1,x_2,\cdots,x_n )$ and $\textbf y =(y_1,y_2,\cdots,y_n )$ as users' location profiles in the 2D space.
Depending on the users' locations, the single UAV's location is denoted as point $(x, y, z_0),$ which is in 3D space. \footnote{We will extend all solutions to include more than one UAV in Section \ref{section_multiple}.}
$z_0$ satisfies $z_0\in [0,\infty)$ and is a fixed number determined by the UAV.
The distance between user $i$ and the UAV is $\sqrt{(x_i-x)^2+(y_i-y)^2+z_0^2} $.

We first introduce the UAV placement game, where each user (of type 1 in Fig.\ref{fig1}) prefers the UAV location to be close to his own location for saving his service cost.
Similar to \cite{wangzhe2018},
we model air-to-ground link for the suburban scenarios by using Rician fading with $\kappa$ factor, where $\kappa$ is the ratio between the energy in the line-of-sight (LoS) component and the energy in the multi-path component.
Note that Rician fading is an adequate choice which consists of
an LoS component and a large number of i.i.d. reflected and scattered waves.
Similar to \cite{6863654}, we also fix UAV altitude as $z_0$ to best trade off between LOS links and signal attenuation.
Then we generally model the coordinate of the UAV in 3D as $(x,y,z_0)$ and the coordinate of ground user $i$ in 2D as  $(x_i,y_i)$, respectively.

We thus model the
instantaneous communication channel between the UAV and each user $i$ as a product of a
large-scale path loss component and a small-scale fading component, i.e.,
\begin{align}
g_i=|\psi_i|^2\theta[\sqrt{(x-x_i)^2+(y-y_i)^2+z_0^2}/d]^{-\alpha},\nonumber
\end{align}
where  $\psi_i$ is the small-scale fading gain of the channel between the UAV and user $i$, $\alpha\in[2,6]$ is the path-loss exponent, and $\theta(dB)=-20\log_{10}(4\pi d/\nu)$ denotes the channel power at the
reference distance of $d$ with wavelength $\nu.$
We adopt $d = 1$ meter throughout the paper
and assume the additive white Gaussian noise has zero mean and variance $\sigma^2.$
We consider the small-scale fading gain $\psi_i$ follows Rician fading $\psi_i=\sqrt{\frac{\kappa}{1+\kappa}}\psi_L+\sqrt{\frac{1}{1+\kappa}}\psi_M$ \cite{paulraj2003introduction}, where $\kappa$ is the ratio between the energy in the LoS component and the energy in the multi-path component,
$\psi_L$ is a normalized constant representing the LoS component and
$\psi_M$ is the circular symmetric complex Guassian random variable with zero mean and unit variance.
Denote the transmit power by user $i$ as $c_i$.
The instantaneous signal-to-noise ratio (SNR) for user $i$'s signal at the UAV receiver is given by
\begin{align}
{\mathtt{SNR_i}}=\frac{c_i g_i}{\sigma^2}=\frac{c_i |\psi_i|^2 \theta}{((x-x_i)^2+(y-y_i)^2+z_0^2)^{\alpha/2}\sigma^2}.\nonumber
\end{align}
After determining the location of the UAV and user $i$, the instantaneous
success probability under small-scale fading is given by $\Pr(\mathtt{SNR_i}\geq \mathtt{SNR_{i,th}}),$ where $\mathtt{SNR_{i,th}}$ is the instantaneous SNR threshold for supporting user $i$'s application. To ensure this probability is sufficiently large, we have the condition of $\Pr(\mathtt{SNR_i}\geq \mathtt{SNR_{i,th}})\geq 1-\epsilon$ and rewrite it as $\mathtt{\overline{SNR}_i}\geq \mathtt{\overline{SNR}_{i,th}}$, where $\mathtt{\overline{SNR}_{i,th}}$ is a function of $\mathtt{SNR_{i,th}}$ and $\epsilon$,
and $\mathtt{\overline{SNR}_i}$ is the average SNR give by,
\begin{align}
{\mathtt{\overline{SNR}_i}}&=\frac{c_i \E[g_i]}{\sigma^2}=\frac{c_i \E[|\psi_i|^2] \theta}{((x-x_i)^2+(y-y_i)^2+z_0^2)^{\alpha/2}\sigma^2}.\nonumber\\
&=\frac{c_i \theta}{((x-x_i)^2+(y-y_i)^2+z_0^2)^{\alpha/2}\sigma^2}\geq \mathtt{\overline{SNR}_{i,th}}.\nonumber
\end{align}
For ease of reading, we denote the weight of user $i$ as $w_i=\mathtt{\overline{SNR}_{i,th}}\sigma^2/\theta,$ and expect the minimum
the service cost
\begin{align}
c_i&=\mathtt{\overline{SNR}_{i,th}}\sigma^2 ((x-x_i)^2+(y-y_i)^2+z_0^2)^{\alpha/2}/\theta\nonumber\\
&=w_i  ((x-x_i)^2+(y-y_i)^2+z_0^2)^{\alpha/2}\nonumber
\end{align}
in term of power consumption.
Here,
weight $w_i>0$ models user $i$'s sensitivity 
and he prefers the UAV to be closely located for his cost saving.
If a user has a larger weight as compared to the other users, the UAV should be located closer to him.
We denote $w_i$ as user $i$'s weight and $ \textbf w =(w_1,w_2,\cdots,w_n )$  as weight profile.
In practice, each user $i$'s weight $w_i$ can be estimated by the UAV from identifying the specific traffic application type, and is public information.
\footnote{Actually, some of our mechanisms (e.g., Mechanisms \ref{m1} and \ref{m1.1}) designed later also handles the case without knowing $w_i.$}
However, the UAV does not know the users' location profiles $\textbf{x}$ and $\textbf{y}$.
We denote $\Omega=\{\textbf x,\textbf y|\textbf w\}$ as the full user profile.
The UAV's objective is to minimize the sum of weighted costs by 
 choosing $(x,y,z_0)$ based on
users' locations reports.

In the UAV placement game, a mechanism outputs a UAV location $(x, y, z_0)$ based on a given profile $\Omega$ and thus is a function $f:I^{2n} \to I^2$, i.e., $(x, y)=f(\textbf x,\textbf y).$
As explained, the cost of user $i$ is given by
\begin{align}
	c_i(f(\textbf x,\textbf y),(x_i, y_i))
= &w_i ((x_i\!-x)^2+(y_i\!-y)^2+z_0^2)^{\alpha/2}.
	\label{a1}
\end{align}
Let $\textbf x_{-i}=(x_1,\cdots, x_{i-1}, x_{i+1},\cdots, x_n)$ and $\textbf y_{-i}=(y_1,\cdots,$
$y_{i-1}, y_{i+1},\cdots, y_n)$ denote the location profiles for all $n$ users except user $i$. 
The social cost of a mechanism $f$  is defined as the sum of all users' costs, i.e.,
\begin{align}
	SC(f(\textbf x,\textbf y),(\textbf x,\textbf y))&=\sum_{i=1}^n c_i(f(\textbf x,\textbf y),(x_i,y_i)).
	\label{a3}
\end{align}

Each user $i$ will finally reach the
target average SINR ($\mathtt{\overline{SNR}_{i,th}}$ for user $i$) regardless of the UAV location after deployment.
Thus, the SINR
information is a constant for each user $i$.
Since user $i$ can flexibly adjust its transmit power $c_i,$ it can
always meet the target average SINR ($\mathtt{\overline{SNR}_{i,th}}$) yet may incur a large transmit power cost $c_i.$
To minimize the total transmit power cost $\sum_{i=1}^n c_i$ in the above problem's objective, we further need to know the users¡¯ location information. Thus, we propose users' location reporting of $(x_i, y_i)$, and design truthful mechanisms to collect such reliable location information.
Only after obtaining users' location information before deployment, the UAV can estimate the total objective of problem above at various UAV location $(x, y, z_0)$, and compare to choose the best UAV location.
In the following, we formally define the strategyproofness for mechanism design in the UAV placement  game, which is robust against any distributions of users' locations.
\begin{definition}\label{d1}
	A mechanism is strategyproof in the UAV placement  game if no user can benefit from misreporting his location. Formally, given profile $\Omega=(<x_i,\textbf x_{-i}>, <y_i,\textbf y_{-i}> |\textbf w)\in I^{2n}$, and any misreported location $(x_i'  ,y_i')\in I^2$ for any user $i\in N$, it holds that
\begin{align}
 &c_i(f((x_i,y_i),(\textbf x_{-i},\textbf y_{-i})),(x_i, y_i))\nonumber\\
 \leq & c_i(f((x_i'  ,y_i'),(\textbf x_{-i},\textbf y_{-i})),(x_i, y_i)). \nonumber
\end{align}
\end{definition}

For the UAV placement  game, we are interested in designing strategyproof mechanisms that perform well with respect to minimizing the social cost.  Given a location profile $\Omega$, let $OPT_1(\textbf x,\textbf y)$ be the optimal social cost in (\ref{a3}). A strategyproof mechanism $f$ has an approximation ratio $\gamma\geq 1$, if for any location profile $(\textbf x,\textbf y)\in I^{2n},$
$\gamma OPT_1(\textbf x,\textbf y)\geq SC(f,(\textbf x,\textbf y))$.
$\gamma$ tells us the worst-case performance of $f$ no matter which distributions the users' locations follow, and we prefer $f$ with a small $\gamma$.

On the other hand, in the obnoxious UAV placement game,
the UAV faces a different group of $n$ adverse users (of type 2 in Fig.\ref{fig1}) and introduces downlink interference to them. They
prefer to be far away from the UAV and their (positive) weights $w_i$'s here tell their
different interference sensitivities in their traffic applications. We define
adverse user $i$'s utility $u_i=w_id((x_i, y_i), (x,y,z_0))^2$ under interference, which is the same as (\ref{a1}).
$u_i$ nonlinearly increases with the distance from the UAV. Opposite to the UAV placement game, the UAV's objective in this game is to maximize the sum of users' weighted
utilities, by designing strategyproof mechanisms $f(x,y,z_0)$ for learning users' truthful locations.
The social utility of a mechanism $f$ is defined as:
\begin{align}
SU(f(\textbf x,\textbf y),(\textbf x,\textbf y))&=\sum_{i=1}^n u_i(f(\textbf x,\textbf y),(x_i,y_i)).
\label{a15}
\end{align}

Next, we formally define the strategyproofness for the obnoxious UAV placement game.
	
\begin{definition}\label{d2}
		A mechanism is strategyproof in the obnoxious UAV placement game if no adverse  user can benefit from misreporting his location. Formally, given profile $\Omega=(<x_i,\textbf x_{-i}>, <y_i,\textbf y_{-i}, y_i>|\textbf w)\in I^{2n},$ and any misreported location $(x_i'  ,y_i')\in I^2$ for user $i$, it holds that
\begin{align}
 & u_i(f((x_i,y_i),({\textbf x_{-i}},\textbf y_{-i})),(x_i,y_i))\nonumber\\
 \geq & u_i(f((x_i'  ,y_i'),(\textbf x_{-i},\textbf y_{-i})),(x_i,y_i)). \nonumber
\end{align}
\end{definition}

For the obnoxious UAV placement game, we are interested in designing strategyproof mechanisms that perform well with respect to maximizing the social utility in (\ref{a15}). Given a location profile $\Omega$, let $OPT_2(\textbf x,\textbf y)$ be the optimal social utility. A strategyproof mechanism $f$ has an approximation ratio $\gamma\geq 1$, if for any location profile $(\textbf x,\textbf y)\in I^{2n}, OPT_2(\textbf x,\textbf y)\leq \gamma SU(f,(\textbf x,\textbf y))$.

We will introduce the model of the dual-preference UAV placement game in Section \ref{sectiondpflg}, by combining the two games as defined above. In Section \ref{section_multiple}, we will further extend the three multi-UAV placement games above by including multiple UAVs for new strategyproof mechanisms.

\section{UAV Placement Game for Type 1 Users }\label{sectionflg}

In this section, we design strategyproof mechanisms for the UAV placement game where all $n$ users are of preference type 1. According to (\ref{a1}) and (\ref{a3}), we have the following social cost
\begin{align}
	 SC(f,(\textbf x,\textbf y))=\sum_{i=1}^n w_i ((x_i-x)^2+(y_i-y)^2+z_0^2)^{\alpha/2},\nonumber
\end{align}
which is a convex function with respect to $(x,y)$.
The optimization problem of this game  is formulated as
\begin{align}
\begin{cases}
	&\min_{f}  \sum_{i=1}^n w_i ((x_i-x)^2+(y_i-y)^2+z_0^2)^{\alpha/2},\\
& \mbox{s.t.
for any misreported location $(x_i'  ,y_i')\!\in\! I^2$ and user $i\!\in\! N$,}\\
&\quad c_i(f((x_i,y_i),(\textbf x_{-i},\textbf y_{-i})),(x_i, y_i))\nonumber\\
&\leq  c_i(f((x_i'  ,y_i'),(\textbf x_{-i},\textbf y_{-i})),(x_i, y_i)).\\
&\mbox{Variable: function } f(\textbf x, \textbf y)=(x,y): I^{2n} \to I^2.\\
&\mbox{Parameters:} (x_i, y_i)\in I^2,w_i>0, \mbox{for } i=1,\dots,n,
 z_0\geq 0  \\
 &\mbox{and }\alpha\in [2,6].
 \end{cases}
\end{align}
For $\alpha=2$, by checking the first-order conditions,  we obtain the weighted mean $(x, y, z_0)=(\bar{x},\bar{y}, z_0)$ as the optimal location, where
\begin{align}
\bar{x}=\frac{\sum_{i=1}^n w_i x_i}{\sum_{i=1}^n w_i}
\mbox{ and }
\bar{y}=\frac{\sum_{i=1}^n w_i y_i}{\sum_{i=1}^n w_i}.
\label{a17}
\end{align}
However, this weighted mean mechanism is not strategyproof as we explained in the illustrative example in Section \ref{section_introduction}.
\subsection{Design and Analysis of strategyproof mechanisms}
In the following, we present two strategyproof mechanisms with provable approximation ratios.

\begin{mechanism}\label{m1}
		Given a profile $\Omega$, return median location $(x,y,z_0)=med(\textbf x,\textbf y, z_0)=(x_{med}, y_{med}, z_0)$ as the UAV location, where $x_{med}$ is the median of $\textbf x$,
\footnote{If $n$ is even, we choose the $(\frac{n}{2})$-th smallest  value of $\textbf x$ profile as $x_{med}$. This location strategy is the same for location profiles $\textbf y$ and $\textbf z$.}
and $y_{med}$ is the median of $\textbf y$.

\end{mechanism}

\begin{theorem}\label{t1}
	Define $w_{max}=\max¡\{w_1,\dots,w_n\}$ and $w_{min}=\min ¡\{w_1,\dots,w_n\}.$ Mechanism \ref{m1} is strategyproof for $\alpha\in[2,6].$ and has the approximation  ratio $\gamma=\frac{w_{max}}{w_{min}} 2^{(3\alpha-4)/2}$ for $\alpha\geq2$ as compared to the social optimum.
\end{theorem}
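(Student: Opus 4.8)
The plan is to establish the two assertions separately: first strategyproofness of the median mechanism, then the approximation ratio. For strategyproofness, the key observation is that the cost function $c_i(f(\textbf x,\textbf y),(x_i,y_i)) = w_i\bigl((x_i-x)^2+(y_i-y)^2+z_0^2\bigr)^{\alpha/2}$ is monotone increasing in the horizontal distance $|x_i-x|$ and in $|y_i-y|$ separately, since $t\mapsto (t^2+\text{const})^{\alpha/2}$ is increasing for $t\ge0$ and $\alpha\ge 2$. Because Mechanism~\ref{m1} picks the coordinate-wise median $(x_{med},y_{med})$, and the one-dimensional median rule is the classical strategyproof mechanism for minimizing distance on a line (a user on one side of the median can only push it further away or leave it fixed, never pull it toward himself past the blocking neighbors), user $i$ cannot decrease $|x_i-x_{med}|$ by misreporting $x_i'$, nor decrease $|y_i-y_{med}|$ by misreporting $y_i'$. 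Since $z_0$ is fixed and the two coordinates are chosen independently, any deviation $(x_i',y_i')$ can only (weakly) increase both $|x_i-x|$ and $|y_i-y|$, hence weakly increase $c_i$. I would state this as a short lemma or inline argument, citing the one-dimensional median result analogous to \cite{ProcacciaandTennenholtz}.

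For the approximation ratio, let $(x^*,y^*)$ denote the social optimum and $(x_{med},y_{med})$ the median output; write $OPT_1 = \sum_i w_i\bigl((x_i-x^*)^2+(y_i-y^*)^2+z_0^2\bigr)^{\alpha/2}$ and $SC = \sum_i w_i\bigl((x_i-x_{med})^2+(y_i-y_{med})^2+z_0^2\bigr)^{\alpha/2}$. The strategy is to bound $SC$ term-by-term. First I would pull out the weights: $SC \le w_{max}\sum_i\bigl((x_i-x_{med})^2+(y_i-y_{med})^2+z_0^2\bigr)^{\alpha/2}$ and $OPT_1 \ge w_{min}\sum_i\bigl((x_i-x^*)^2+(y_i-y^*)^2+z_0^2\bigr)^{\alpha/2}$, which contributes the factor $w_{max}/w_{min}$. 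It then remains to show the unweighted (or uniformly weighted) ratio is at most $2^{(3\alpha-4)/2}$. For this I would compare the median point to the unweighted mean $(\bar x_0,\bar y_0)$ of the true locations, noting that for each coordinate $|x_i - x_{med}| \le |x_i - \bar x_0| + |\bar x_0 - x_{med}|$ and that $|\bar x_0 - x_{med}|$ is bounded by the spread of the points; combined with the fact that the mean minimizes the sum of squared distances (the $\alpha=2$ optimum) one gets $\sum_i\bigl((x_i-x_{med})^2+\cdots\bigr) \le 2\sum_i\bigl((x_i-x^*)^2+\cdots\bigr)$ in the squared sense. Raising to the $\alpha/2$ power and using the inequality $\bigl(\sum_i a_i\bigr)^{\alpha/2} \le$ (appropriate constant) $\sum_i a_i^{\alpha/2}$ together with the per-term factor of $2^{\alpha/2}$ from the displacement bound, and carefully tracking how the powers of $2$ accumulate, should yield the exponent $(3\alpha-4)/2$.

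The main obstacle I anticipate is the bookkeeping of the constants when passing from squared distances to $\alpha/2$-powers: one must be careful whether the factor-of-$2$ losses happen inside the parentheses (before exponentiation, costing $2^{\alpha/2}$ each) or outside (after, costing only $2$), and the precise exponent $(3\alpha-4)/2 = \alpha/2 + (\alpha-1) + \dots$ suggests a specific split — likely one factor of $2^{\alpha/2}$ from the coordinate displacement $|x_i-x_{med}|\le 2\max(|x_i-x^*|,\dots)$ bound, one factor of $2^{(\alpha-2)/2}$ (or $2^{\alpha/2-1}$) from convexity of $t\mapsto t^{\alpha/2}$ applied to the two-term sum, and another such factor from comparing the median displacement to the optimal configuration's diameter. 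I would isolate a clean one-dimensional / single-term inequality of the form $(a+b)^{\alpha/2}+(c+d)^{\alpha/2}+z_0^\alpha \le 2^{(3\alpha-4)/2}\bigl((a')^2+(c')^2+z_0^2\bigr)^{\alpha/2}$ relating the median term to the optimal term and verify the constant there, then sum over $i$. A secondary subtlety is handling the additive $z_0^2$ term uniformly so it does not inflate the ratio — it helps rather than hurts, since it appears on both sides and only makes the ratio of the bracketed quantities closer to $1$, so I would note that the worst case is $z_0=0$.
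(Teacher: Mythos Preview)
Your strategyproofness argument is essentially the paper's: the coordinate-wise median cannot be pulled closer by a unilateral misreport, and since the cost is monotone in each of $|x_i-x|$ and $|y_i-y|$ separately, no deviation helps. You also correctly note that $z_0=0$ is the worst case for the ratio, which the paper uses as well. The weight extraction giving the factor $w_{max}/w_{min}$ is identical to the paper.

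The approximation-ratio argument, however, has a real gap. Your plan is to first bound the \emph{sum of squared distances} to the median by twice the sum of squared distances to the optimum, and then ``raise to the $\alpha/2$ power''. But the objective is $\sum_i \bigl(\text{sq.\ dist}_i\bigr)^{\alpha/2}$, not $\bigl(\sum_i \text{sq.\ dist}_i\bigr)^{\alpha/2}$, and a bound on the latter says nothing useful about the former. The inequality you invoke, $\bigl(\sum_i a_i\bigr)^{\alpha/2}\le C\sum_i a_i^{\alpha/2}$, fails for any constant $C$ independent of $n$ when $\alpha>2$ (take $a_1=\dots=a_n=1$). Likewise the per-term displacement bound $|x_i-x_{med}|\le 2\max(|x_i-x^*|,\dots)$ you suggest is not true in general, so that route does not recover the constant either. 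In short, you cannot go through the $\alpha=2$ case and then exponentiate the sum.

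The paper's route is different and avoids this entirely. It works \emph{term by term} and \emph{coordinate by coordinate}, using two elementary two-variable inequalities (Lemmas~1 and~2): for $C,D\ge0$ and $E=\alpha/2\ge1$,
\[
C^{E}+D^{E}\ \le\ (C+D)^{E}\ \le\ 2^{E-1}\bigl(C^{E}+D^{E}\bigr).
\]
Applying the upper bound with $C=(x_i-x_{med})^2$, $D=(y_i-y_{med})^2$ to each numerator term, and the lower bound with $C=(x_i-x)^2$, $D=(y_i-y)^2$ to each denominator term, decouples the $2$D ratio into a product of a factor $2^{\alpha/2-1}$ and a purely one-dimensional ratio
\[
\frac{\sum_i (x_i-x_{med})^{\alpha}}{\min_x \sum_i (x_i-x)^{\alpha}},
\]
which is at most $2^{\alpha-1}$ (the standard $1$D median bound for $\alpha$-power costs). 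Multiplying gives $2^{\alpha/2-1}\cdot 2^{\alpha-1}=2^{(3\alpha-4)/2}$. This is where your exponent bookkeeping was heading, but the mechanism for getting there is the coordinate decoupling via the two-term power inequalities, not a sum-then-power step.
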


\begin{proof}
First we prove Mechanism \ref{m1} is a strategyproof mechanism.
Assume $x_1\leq x_2\dots\leq x_n$ without loss of generality, and  $x$-location of UAV $x_{med}$ is $x_j$ (i.e., $x=x_{med}=x_j$).
If user $i$ ($i\leq j$) chooses to misreport his $x$ location, we have two cases:
(\romannumeral1) The misreported $x$-value is smaller than the original $x$-value $x_j$ and the $x$-value of the new UAV location (i.e., $x$) will not change;
(\romannumeral2) The misreported $x$-value is greater than the original $x$-value $x_j$ and the $x$-value of the new UAV location (i.e., $x$) will not be smaller than $x_j$. However, $(x_i-x)^2$ will not decrease and thus his cost $w_i ((x_i-x)^2+(y_i-y)^2+z_0^2)^{\alpha/2}$ will not decrease.
Therefore, user $i$ cannot decrease his cost by misreporting his $x_i$. Similarly, he cannot decrease his cost by misreporting his $y_i$ in the other independent domain of the 2D space.
Similar results hold for $i>j$ due to symmetry.
Next, we prove the approximation ratio $\gamma.$

To obtain the approximation ratio, we need to let $z_0=0$ first.
By Lemmas 1 and 2 which are  given in Appendix \ref{app_1} and Appendix \ref{app_2}, respectively,
we have
\begin{align}\label{a52}
\gamma&
=\frac{\sum_{i=1}^n w_i ((x_i-x_{med})^2+(y_i-y_{med})^2)^{\alpha/2}}{\min_{x,y}\sum_{i=1}^n w_i ((x_i-x)^2+(y_i-y)^2)^{\alpha/2}}\nonumber\\
&\leq \frac{w_{max}}{w_{min}} \frac{\sum_{i=1}^n ((x_i-x_{med})^2+(y_i-y_{med})^2)^{\alpha/2}}{\min_{x,y}\sum_{i=1}^n  ((x_i-x)^2+(y_i-y)^2)^{\alpha/2}}\nonumber\\
&\leq \frac{w_{max}}{w_{min}} \frac{\sum_{i=1}^n 2^{\alpha/2-1}((x_i-x_{med})^\alpha+(y_i-y_{med})^\alpha)}{\min_{x,y}\sum_{i=1}^n  ((x_i-x)^\alpha+(y_i-y)^\alpha)}\nonumber\\
&=\frac{w_{max}}{w_{min}} \frac{2^{\alpha/2-1}\sum_{i=1}^n (x_i-x_{med})^\alpha}{\min_{x}\sum_{i=1}^n  (x_i-x)^\alpha}\nonumber\\
&\leq \frac{w_{max}}{w_{min}} 2^{\alpha/2-1} 2^{\alpha-1}=\frac{w_{max}}{w_{min}} 2^{(3\alpha-4)/2}.
\end{align}
\end{proof}
Mechanism \ref{m1} treats each user equally and does not consider users' weights.  If users have diverse weights such that $\frac{w_{max}}{w_{min}}$ is large, the approximation ratio $\gamma$ is large.
It should be noted that Mechanism \ref{m1} also has its merit:  since the UAV does not need to gather the information of weights from users, it is strategyproof even if we allow users to misreport their weights. Next, we propose a better mechanism to achieve a much smaller approximation ratio.

\begin{mechanism}\label{m2}
	
Consider $x$-domain first, we reorder $\{x_1,x_2,\dots,x_n\}$ as $\{x_{j_1},x_{j_2},\dots,x_{j_n}\}$ with $x_{j_1}\leq x_{j_2}\leq \dots\leq x_{j_n}.$
Define $x_{wmed}$ as a particular $x_{j_{q}}$, where integer $q$ satisfies  $\sum_{i\leq q}w_{j_{i}}\geq\sum_{i>q}w_{j_{i}}$ and $\sum_{i< q}w_{j_{i}}<\sum_{i\geq q}w_{j_{i}}.$
In $y$-domain, $y_{wmed}$ follows the same structure.
	Given a profile $\Omega$, the UAV adopts weighted median $wmed(\textbf x,\textbf y,z_0)=(x_{wmed}, y_{wmed},z_0)$ for its location.
\end{mechanism}

\begin{theorem}\label{t2}
	Mechanism \ref{m2} is strategyproof for $\alpha\in[2,6]$ and has the approximation ratio $\gamma=2^{(3\alpha-4)/2}$ for $\alpha\geq 2.$
\end{theorem}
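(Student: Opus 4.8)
The plan is to follow the proof of Theorem~\ref{t1} almost line for line, replacing the plain median by the weighted median; the payoff is that this replacement keeps the mechanism strategyproof while removing the factor $w_{max}/w_{min}$ from the ratio. For strategyproofness, note that the two coordinates decouple: $x_{wmed}$ depends only on the reported $x$-profile and the public weights $\textbf{w}$, $y_{wmed}$ only on the reported $y$-profile, and the cost $w_i((x_i-x)^2+(y_i-y)^2+z_0^2)^{\alpha/2}$ is nondecreasing in $(x_i-x)^2$ and in $(y_i-y)^2$ separately, so it suffices to show that no user can decrease $(x_i-x_{wmed})^2$ by misreporting $x_i$ (the $y$-coordinate is symmetric and a joint deviation reduces to the coordinatewise ones). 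Let $m=x_{wmed}$ be the truthful $x$-outcome. If $x_i=m$, the truthful $x$-cost is already $0$, the minimum possible, so no deviation helps. If $x_i\ne m$, assume WLOG $x_i<m$: when user $i$ reports any $x_i'<m$ he stays strictly left of $m$, so the weights on either side of $m$ are unchanged and the tie-breaking rule of Mechanism~\ref{m2} still returns $m$; when he reports $x_i'\ge m$, monotonicity of the weighted median in each report gives a new outcome $m'\ge m>x_i$, so $(x_i-m')^2\ge(x_i-m)^2$. In either case $(x_i-x_{wmed})^2$ does not decrease, and $x_i>m$ is handled symmetrically; this parallels the argument used for Mechanism~\ref{m1}.

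For the approximation ratio I would reuse the chain of inequalities in the proof of Theorem~\ref{t1}, now keeping the weights inside throughout instead of factoring them out. As there, it suffices to take $z_0=0$, since the common additive term $z_0^2$ inside the $(\cdot)^{\alpha/2}$ of both numerator and denominator only moves the ratio toward $1$. Applying convexity of $t\mapsto t^{\alpha/2}$ (valid for $\alpha\ge2$) to the numerator, $((x_i-x_{wmed})^2+(y_i-y_{wmed})^2)^{\alpha/2}\le 2^{\alpha/2-1}(|x_i-x_{wmed}|^\alpha+|y_i-y_{wmed}|^\alpha)$, and superadditivity of the same map to the denominator, $((x_i-x)^2+(y_i-y)^2)^{\alpha/2}\ge|x_i-x|^\alpha+|y_i-y|^\alpha$, the cost ratio separates across the $x$- and $y$-coordinates; by the mediant inequality $\tfrac{a_1+a_2}{b_1+b_2}\le\max\{\tfrac{a_1}{b_1},\tfrac{a_2}{b_2}\}$ it then reduces to the one-dimensional weighted estimate
\begin{align}
\frac{\sum_{i=1}^n w_i\,|x_i-x_{wmed}|^\alpha}{\min_x\ \sum_{i=1}^n w_i\,|x_i-x|^\alpha}\ \le\ 2^{\alpha-1}
\nonumber
\end{align}
together with its analogue in $y$. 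Multiplying by the factor $2^{\alpha/2-1}$ coming from the numerator inequality then yields $\gamma\le 2^{\alpha/2-1}\cdot 2^{\alpha-1}=2^{(3\alpha-4)/2}$, as claimed.

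The crux, and the step I expect to be the real obstacle, is the boxed one-dimensional bound, which is the weighted analogue of the estimate used for the plain median in Theorem~\ref{t1} (Lemma~2 in the appendix). The cleanest route is a reduction to the unweighted case: for rational weights, replace user $i$ by a multiplicity proportional to $w_i$; then $x_{wmed}$ and $\sum_i w_i|x_i-x|^\alpha$ of the original instance coincide with the plain median and the unweighted $\alpha$-power cost of the expanded instance, so the unweighted Lemma~2 already gives the bound $2^{\alpha-1}$ there, and one passes to arbitrary positive weights by continuity. Alternatively one can argue directly: taking $x^\star$ to be the $\ell_\alpha$-optimal point and assuming WLOG $x^\star\ge x_{wmed}=m$, the defining weight-splitting property of the weighted median makes the total weight of $\{i:x_i\le m\}$ at least that of $\{i:x_i>m\}$; combining $|x_i-m|\le|x_i-x^\star|$ on the left group, $|x_i-m|\le|x_i-x^\star|+(x^\star-m)$ on the right group, the bound $|x_i-x^\star|\ge x^\star-m$ on the left group, and the inequality $(a+b)^\alpha\le 2^{\alpha-1}(a^\alpha+b^\alpha)$, and then using the weight comparison to absorb the surplus $(x^\star-m)^\alpha$ terms. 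The delicate point in the direct argument is doing the accounting sharply enough (in the unweighted case one pairs data points symmetrically about the median) to land on exactly $2^{\alpha-1}$ rather than a weaker constant; the reduction route sidesteps this by inheriting the constant from the already-established unweighted lemma, which is why Theorem~\ref{t2} ultimately drops the $w_{max}/w_{min}$ factor of Theorem~\ref{t1} at no extra cost in the exponent.
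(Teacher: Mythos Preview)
Your proposal is correct and matches the paper's own argument: the paper, too, proves strategyproofness by reusing the coordinate-wise monotonicity argument from Theorem~\ref{t1}, and for the approximation ratio it performs exactly your ``cleanest route'' reduction---rescale the weights to integers, replace each user by $w_i$ unit-weight copies, observe that $x_{wmed}$ becomes the ordinary median of the expanded profile, and then invoke the unweighted chain of inequalities from (\ref{a52}) to obtain $2^{(3\alpha-4)/2}$. Your direct alternative for the one-dimensional bound is extra relative to the paper, but the primary approach coincides.
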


\begin{proof}
First, we can use the similar analysis in Theorem \ref{t1} to prove that Mechanism \ref{m2} is strategyproof for $\alpha\in[2,6]$.
Now we prove the approximation ratio for $d\geq2$.
Without loss of generality, we rescale  each  $w_i$ uniformly as positive integer in this proof.
By partitioning user $i$ into a number $w_i$ of small users with unit weight $1$,
we obtain new sequenced sets of profile $\textbf{x}$ and $\textbf{y}:$
\begin{align}
&\{\underbrace{x_{j_1},\dots,x_{j_1}}_{w_{j_1}},\underbrace{x_{j_2},\dots,x_{j_2}}_{w_{j_2}},\dots,\underbrace{x_{j_n},\dots,x_{j_n}}_{w_{j_n}}\},\nonumber\\
&\{\underbrace{y_{j_1},\dots,y_{j_1}}_{w_{j_1}},\underbrace{y_{j_2},\dots,y_{j_2}}_{w_{j_2}},\dots,\underbrace{y_{j_n},\dots,y_{j_n}}_{w_{j_n}}\}.
\label{a7}
\end{align}
Then we rewrite $SC(f,(\textbf x,\textbf y))\leq \gamma OPT_1(\textbf x,\textbf y)$ as
\begin{align}
&\sum_{i=1}^n(\underbrace{{(x_{j_i}-x_{wmed})}^2+{(y_{j_i}-y_{wmed})}^2)^{\alpha/2}+\dots}_{w_{j_i}})\nonumber\\
\leq &\gamma \min_{x,y}\sum_{i=1}^n(\underbrace{{(x_{j_i}-x)}^2+{(y_{j_i}-y)}^2)^{\alpha/2}+\dots}_{w_{j_i}})
\label{a16}
\end{align}
Note that $x_{wmed}$ is the median in set (\ref{a7}) and
$y_{wmed}$ is the median in set (\ref{a7}).
According to the similar proof in (\ref{a52}), we can prove  in (\ref{a16}), $\gamma=2^{(3\alpha-4)/2}.$

\end{proof}
Comparing Mechanisms \ref{m1} and \ref{m2}, we can see that Mechanism \ref{m2} takes users' heterogeneous weights into account for the mechanism design and achieves better worst-case performance. Besides the worst-case analysis, we will show in Subsection \ref{subsection_empirical} that these two mechanisms perform analogously in average sense to approach the social optimum.
The complexity of the proposed two mechanisms should be $O(n).$
Such complexity is low and can return solution quickly.
The computation complexities of the following Mechanisms 3-5 are linear in
the number $n$ of users and thus in the order also are $O(n).$


\subsection{Empirical analysis of Mechanisms \ref{m1} and \ref{m2}}\label{subsection_empirical}

So far we have only analyzed the worst-case performances of the two mechanisms in term of approximation ratio. In this subsection, we present empirical analysis to further evaluate the average performances of the mechanisms for $\alpha=2$.
In Mechanism \ref{m1}, we choose median location as the UAV location and we define the social cost ratio by comparing to the social optimum:
$\mathtt{Ratio.1}=\frac{SC(med(\textbf{x},\textbf{y},z_0),(\textbf{x},\textbf{y}))}{OPT_1(\textbf{x},\textbf{y})}.$
In Mechanism \ref{m2}, we choose weighted median location as the UAV location and we define the social cost ratio:
$\mathtt{Ratio.2}=\frac{SC(wmed(\textbf{x},\textbf{y},z_0),(\textbf{x},\textbf{y}))}{OPT_1(\textbf{x},\textbf{y})}.$
Note that $\mathtt{Ratio.1}$ and $\mathtt{Ratio.2}$ are random variables, depending on distributions of $\textbf x$, $\textbf{y}$ and $\textbf{z}$, while approximation ratio $\gamma$ characterizes the maximum of each ratio in the worst-case.



\begin{theorem}\label{t3}
	For $\alpha=2,$ as the number $n$ of users goes to infinity, both $\mathtt{Ratio.1}$ and $\mathtt{Ratio.2}$ converge in probability towards $1$, given all $x_i$'s, $y_i$'s, $z_i$'s and $w_i$'s are independent and identically distributed, respectively, and all $x_i$'s, $y_i$'s and $z_i$'s follow any continuous symmetric distributions (including normal, uniform and logistic distributions), respectively.
\end{theorem}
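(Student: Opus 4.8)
The plan is to exploit the law of large numbers together with the symmetry of the coordinate distributions. For $\alpha=2$ the social cost of placing the UAV at $(x,y,z_0)$ is $\sum_i w_i\big((x_i-x)^2+(y_i-y)^2+z_0^2\big)$, which is separable across the $x$- and $y$-coordinates (the $z_0^2$ term is a common additive constant times $\sum_i w_i$). Since $OPT_1$ is attained at the weighted mean $(\bar x,\bar y,z_0)$ of (\ref{a17}), and both Mechanism~\ref{m1} and Mechanism~\ref{m2} output coordinates that are (weighted or unweighted) medians, it suffices to control, coordinate by coordinate, the gap between the chosen point and the weighted mean. Concretely, writing $\mathrm{med}(x)$ for the value returned in the $x$-coordinate, the ratio $\mathtt{Ratio.1}$ (and analogously $\mathtt{Ratio.2}$) equals
\begin{align}
\frac{\sum_i w_i\big((x_i-\mathrm{med}(x))^2+(y_i-\mathrm{med}(y))^2+z_0^2\big)}{\sum_i w_i\big((x_i-\bar x)^2+(y_i-\bar y)^2+z_0^2\big)}
= 1+\frac{\big(\sum_i w_i\big)\big[(\bar x-\mathrm{med}(x))^2+(\bar y-\mathrm{med}(y))^2\big]}{\sum_i w_i\big((x_i-\bar x)^2+(y_i-\bar y)^2+z_0^2\big)},\nonumber
\end{align}
using the Pythagorean decomposition $\sum_i w_i(x_i-c)^2=\sum_i w_i(x_i-\bar x)^2+(\sum_i w_i)(\bar x-c)^2$. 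So the whole statement reduces to showing the second term vanishes in probability.

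The key steps are then as follows. First I would handle the numerator of that second term: show that $\bar x - \mathrm{med}(x)\to 0$ in probability. For the unweighted median this is the classical fact that the empirical median converges to the population median, which by symmetry of the distribution coincides with the population mean $\mu_x = \E[x_i]$; simultaneously $\bar x\to\mu_x$ by the weak law of large numbers (the $w_ix_i$ are i.i.d.\ with $\bar x=\frac1n\sum w_ix_i \big/ \frac1n\sum w_i\to \E[w]\E[x]/\E[w]=\mu_x$, using independence of $w_i$ and $x_i$). For Mechanism~\ref{m2}, the weighted median: partition user $i$ into $w_i$ unit-weight copies as in the proof of Theorem~\ref{t2}; the weighted median is an (unweighted) quantile of this enlarged sample, and since the $(x_i,w_i)$ are i.i.d.\ the enlarged empirical distribution converges to the law of $x_1$ re-weighted by $w_1$, which is still symmetric about $\mu_x$ because $w_i\perp x_i$ — hence its median is again $\mu_x$, and $x_{wmed}\to\mu_x$. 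Second, for the denominator: $\frac1n\sum_i w_i\big((x_i-\bar x)^2+(y_i-\bar y)^2+z_0^2\big)\to \E[w](\Var(x)+\Var(y)+z_0^2)$, a strictly positive constant (it is bounded below by $z_0^2\E[w]>0$ when $z_0>0$, and by $\E[w]\Var(x)>0$ for continuous distributions in any case), again by the weak law of large numbers after replacing $\bar x,\bar y$ by $\mu_x,\mu_y$ with a negligible error. Combining, the second term is $O_p(1)\cdot o_p(1)=o_p(1)$, so both ratios converge in probability to $1$.

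The main obstacle I anticipate is the rigorous treatment of the weighted median for Mechanism~\ref{m2}: one must justify that the "split into unit weights" construction yields a sample whose empirical quantile converges, even though the number of atoms $\sum_i w_i$ is itself random and the copies created from a single user are perfectly correlated. The clean way around this is to avoid the splitting entirely at the probabilistic level and instead argue directly that $x_{wmed}$ is characterized by $\sum_i w_i\mathbf{1}\{x_i\le x_{wmed}\}\ge \frac12\sum_i w_i$ and the complementary inequality, divide by $n$, and apply the weak law to the numerator $\frac1n\sum_i w_i\mathbf{1}\{x_i\le t\}\to \E[w]\,F_x(t)$ uniformly in $t$ (a Glivenko–Cantelli-type argument), so that $F_x(x_{wmed})\to\frac12$, i.e.\ $x_{wmed}\to$ the population median $=\mu_x$ by continuity and strict monotonicity of $F_x$. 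A secondary, minor point is the edge case $z_0=0$ together with a degenerate coordinate, which is excluded since the distributions are continuous (so $\Var(x),\Var(y)>0$); I would note this in passing. Everything else is routine application of the weak law of large numbers and the continuous mapping theorem.
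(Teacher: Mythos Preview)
Your argument is correct and shares the paper's structural core: both use the Pythagorean identity $\sum_i w_i(x_i-c)^2=\sum_i w_i(x_i-\bar x)^2+(\sum_i w_i)(\bar x-c)^2$ to reduce the ratio to $1$ plus a term controlled by $|\bar x-\mathrm{med}(x)|^2$, and then show that both the weighted mean and the (weighted or unweighted) median converge in probability to the common population mean/median. The difference lies in the technical machinery. The paper proves $x_{med}\to H^{-1}(\tfrac12)$ by writing down the explicit order-statistics density, appealing to Stirling's formula, and identifying the limiting integrand with a Dirac delta; it proves $\bar x\to\E[x_1]$ by a direct integral computation of the mean and variance. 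Your route via the weak law of large numbers (for $\bar x$) and standard sample-quantile consistency (for $x_{med}$) is more probabilistically natural and shorter. Your proposed Glivenko--Cantelli argument for $x_{wmed}$---showing $\tfrac1n\sum_i w_i\mathbf 1\{x_i\le t\}\to\E[w]F_x(t)$ and hence $F_x(x_{wmed})\to\tfrac12$---is also a genuine improvement over the paper, which for Mechanism~\ref{m2} simply asserts ``Similarly'' without working out the weighted-median case. What the paper's more explicit computation buys is a self-contained derivation that does not invoke named theorems; what your approach buys is brevity, and a rigorous treatment of the weighted median that avoids the integer-splitting trick and its attendant dependence issues, which you rightly flag as the main technical hazard.
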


\begin{figure}[h]
\centering
\subfigure[Mean social cost of Mechanism 1 and mean social cost of Mechanism 2 versus mean optimal social cost.]{
\label{fig1.1a} \includegraphics[height=7.0cm, width=8.5cm]{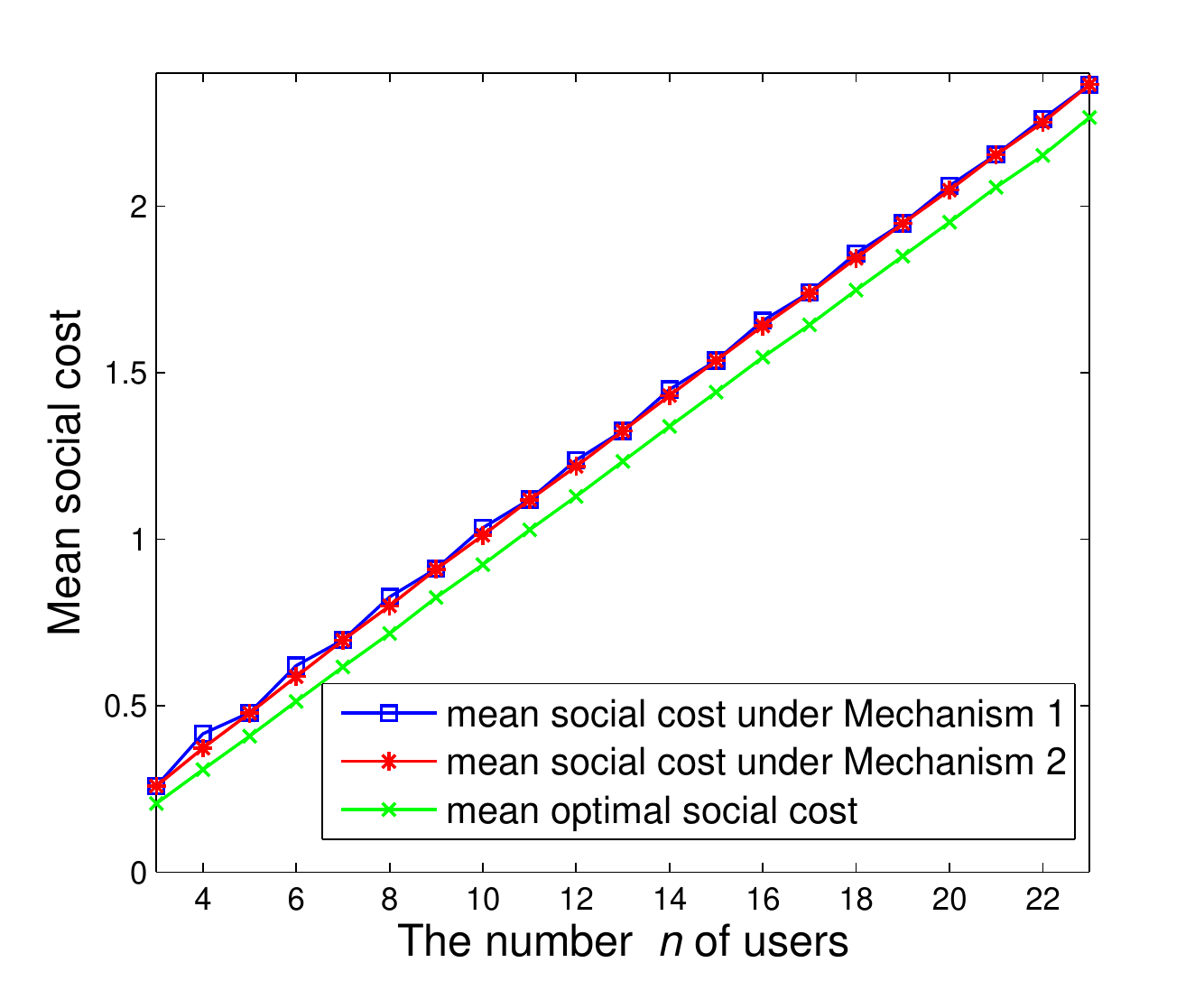}}
\hspace{-0.00in}
\subfigure[Mean $\mathtt{Ratio.1}$ of Mechanism 1 and mean $\mathtt{Ratio.2}$ of Mechanism 2 versus the number $n$ of users.]{
\label{fig1.1b}
\includegraphics[height=7.0cm, width=8.5cm]{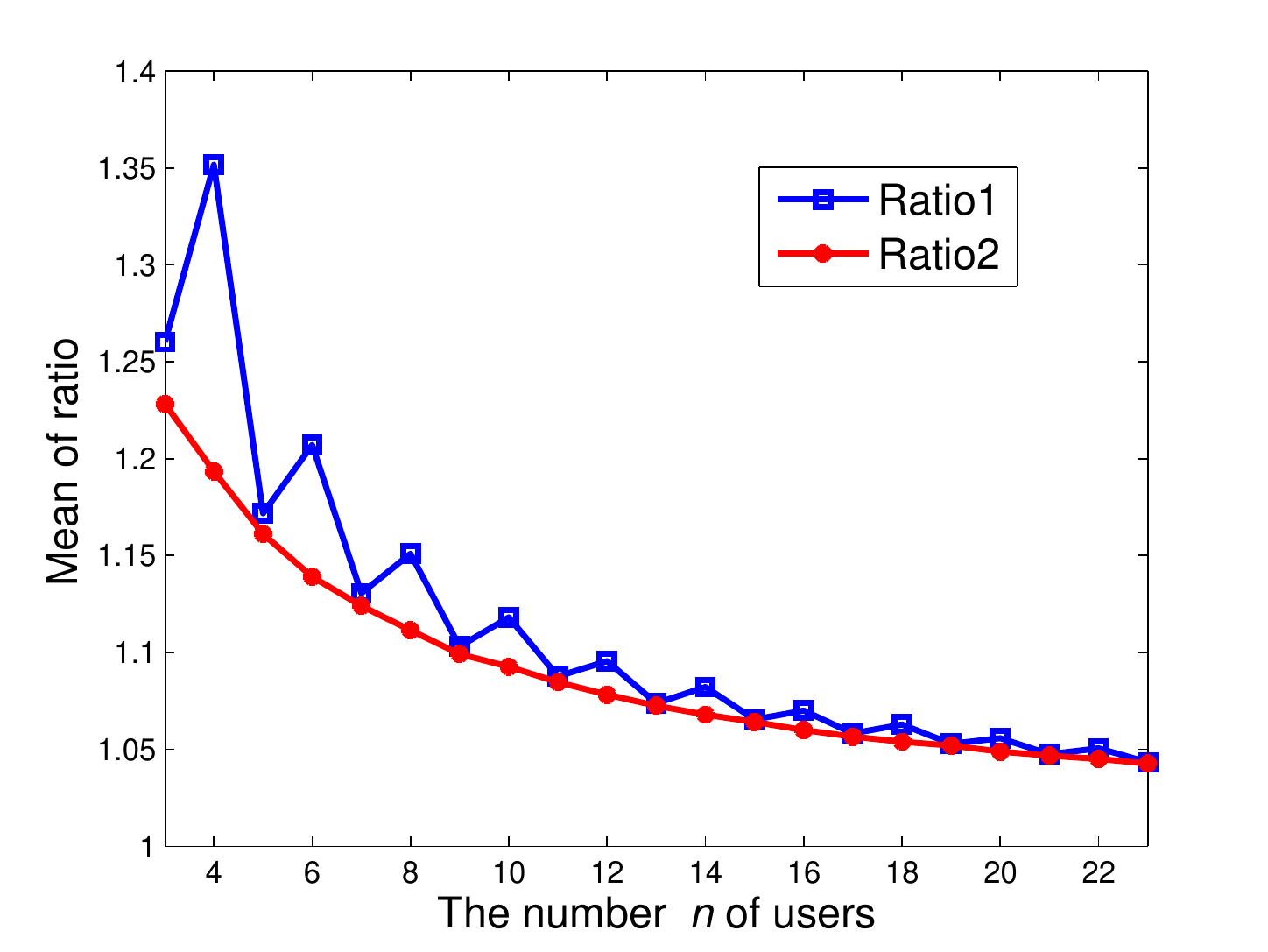}}

\caption{ Simulations for comparing the social costs under
Mechanisms 1 and 2 with the optimal social cost.} \label{fig2}
\end{figure}

The proof of Theorem \ref{t3} is given in Appendix  \ref{app_3}.
Theorem \ref{t3} proves that the two mechanisms perform
optimally when $n$ is sufficiently large.
We next provide more  simulations for evaluating Mechanisms \ref{m1} and \ref{m2}. For simplicity, we assume $\alpha=2,$ $z_0=0.2$ and $I^2=[0,1]^2$, where each user's location follows the continuous uniform distribution in $I^2$ and each $w_i$ follows the continuous uniform distribution in $[0,1]$.

Fig.\ref{fig1.1a} shows that
the two mean social costs under Mechanisms \ref{m1} and  \ref{m2} as well as the mean optimal social cost increase linearly as the number of users increases.
No matter which number of users we are looking at, the performance gaps between  the mean optimal social cost and the mean social costs achieved by Mechanism \ref{m1} and  \ref{m2} are quite small, which means our Mechanisms \ref{m1} and  \ref{m2}  approximate well the optimal solution in average sense.

Fig.\ref{fig1.1b} shows that the mean two random  ratios 
decrease to $1$ as the number of users increases.
Consistent with our prior worst-case conclusion, Fig.\ref{fig1.1b} also shows that in the average-case
Mechanism \ref{m2} outperforms Mechanism \ref{m1}, as the mean of $\mathtt{Ratio.2}$ is smaller than the mean of $\mathtt{Ratio.1}$ given any number of users.
Yet such advantage is no longer obvious once $n>10$.
Interestingly, from Fig.\ref{fig1.1b} we can observe that $\mathtt{Ratio.1}$ at odd number $n$ ($n=2m-1$ with natural number $m$) of user size is smaller than $\mathtt{Ratio.1}$ at neighboring even number $n$ ($n=2m$). This is because if $n$ is odd, $med(\textbf{x},\textbf{y}, z_0)$ can be relatively closer to $(\bar{x},\bar{y},z_0),$ as compared to  the case that $n$ is even.

\section{Obnoxious UAV Placement Game for Type 2 Users }\label{sectionoflg}
In this section, we study strategyproof mechanism design for the obnoxious UAV placement game, where all $n$ users of type 2 in Fig.\ref{fig1} view the UAV obnoxious due to its introduced interference and want to be far away from the UAV.
The optimization problem of this game  is formulated as
\begin{align}
\begin{cases}
	&\max_{f}  \sum_{i=1}^n w_i ((x_i-x)^2+(y_i-y)^2+z_0^2)^{\alpha/2},\nonumber\\
& \mbox{s.t.
for any misreported location $(x_i' ,y_i')\!\in\! I^2\!$ and user $i\!\in\! N$,}\nonumber\\
&\ \ \ \ u_i(f((x_i,y_i),(\textbf x_{-i},\textbf y_{-i})),(x_i, y_i))\nonumber\\
&\geq u_i(f((x_i'  ,y_i'),(\textbf x_{-i},\textbf y_{-i})),(x_i, y_i)).\nonumber\\
&\mbox{Variable: function } f(\textbf x, \textbf y)=(x,y): I^{2n} \to I^2.\nonumber\\
&\mbox{Parameters:} (x_i, y_i)\in I^2,w_i>0, \mbox{for } i=1,\dots,n, z_0\geq 0,  \nonumber\\
&\mbox{and }\alpha\in[2,6].\nonumber
\end{cases}
\end{align}

\subsection{Design and analysis of strategyproof mechanism}
We first analyze the optimal UAV location under full information as the benchmark.
We obtain that
\begin{align}
&\frac{\partial SU(x,y)}{\partial x}
\!=\!\sum_{i=1}^n\alpha w_i(x\!-\!x_i)((x\!-\!x_i)^2\!+(y\!-\!y_i)^2\!+z_0^2)^{\frac{\alpha-2}{2}},\nonumber\\
&\frac{\partial^2 SU(x,y)}{\partial x^2}=\sum_{i=1}^n\alpha w_i(((x-x_i)^2+(y-y_i)^2+z_0^2)^{\alpha/2-1}\nonumber\\
&+(\alpha-2) ((x-x_i)^2+(y-y_i)^2+z_0^2)^{\alpha/2-2}(x-x_i)^2)\geq 0,\nonumber
\end{align}
and similarly, $\frac{\partial^2 SU(x,y)}{\partial y^2}\geq 0.$
Thus $SU((x,y),(\textbf x, \textbf y))$ is convex in $(x,y)\in I^2.$
The optimal (maximum) solution should lie in boundary of $I^2$, which are the corners of rectangle $I^2.$
We obtain the optimal UAV location is $x=0$ or $2A$ and $y=0$ or $2B.$

This optimal solution is not strategyproof by considering  an illustrative example in 1D: given $\alpha=2,$ there are user $1$ at $x_1=0.2$ and user $2$ at $x_2=0.6$ in domain 1D $I=[0,1]$, where user $2$ can misreport his location to $x_2'=1$ to keep the UAV away from him at $x=0.$
Next, we design strategyproof mechanisms.
\begin{mechanism}\label{m4}
	The UAV strategically decides its location $f=(x,y,z_0)$, where
	\begin{align}
& x= \left\{ \begin{array}{ll}
	0, & \mbox{if $\sum_{i:x_i\in [0,A)}w_i\leq\sum_{i:x_i\in [A,2A]}w_i$};\\
	2A, & \mbox{otherwise},\end{array} \right.\nonumber\\
& y= \left\{ \begin{array}{ll}
	0, & \mbox{if $\sum_{i:y_i\in [0,B)}w_i\leq\sum_{i:y_i\in [B,2B]}w_i$};\\
	2B, & \mbox{otherwise}.\end{array} \right.
\end{align}
\end{mechanism}

In Mechanism \ref{m4}, The UAV compares the total user weights in regimes $[0,A)$ and $[A,2A]$ of the $x$-domain, and places the obnoxious UAV at the corner with the smaller total weight. Similarly, The UAV places its location in $y$-domain for the weighted majority's benefit.

\begin{theorem}\label{t4}
	Mechanism \ref{m4} is a strategyproof mechanism with approximation ratio $\gamma=5\times 2^{(\alpha-2)/2}$ in the obnoxious UAV placement game.
\end{theorem}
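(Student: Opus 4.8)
The plan has two independent parts, strategyproofness and the approximation ratio, and in both I would exploit the feature used in the proof of Theorem~\ref{t1}: Mechanism~\ref{m4} fixes the UAV's $x$-coordinate purely from the reported $x$'s and its $y$-coordinate purely from the reported $y$'s.

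\emph{Strategyproofness.} Since $u_i$ is nondecreasing in $(x_i-x)^2$ for fixed $y$ and nondecreasing in $(y_i-y)^2$ for fixed $x$, it suffices to show that, holding $\textbf x_{-i}$ fixed, truthful reporting of $x_i$ already maximizes $(x_i-x)^2$ over all reports $x_i'$ (and symmetrically for $y_i$); inserting the two coordinatewise inequalities into $(\cdot)^{\alpha/2}$ then gives Definition~\ref{d2}. For the $x$-claim, suppose $x_i\in[0,A)$, so user $i$'s farther corner is $2A$ (the case $x_i\in[A,2A]$ is symmetric with the two corners swapped). Writing $W_L'$, $W_R'$ for the total weight of the \emph{other} users in $[0,A)$ and in $[A,2A]$, truthful reporting yields $x=2A$ exactly when $W_L'+w_i>W_R'$; any misreport that stays in $[0,A)$ leaves this test unchanged, and any misreport into $[A,2A]$ replaces it by $W_L'>W_R'+w_i$, which cannot hold once $W_L'+w_i\le W_R'$. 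So user $i$ can never turn a truthful $x=0$ into $x=2A$, i.e.\ truthful reporting is optimal in the $x$-coordinate; the tie case $x_i=A$ is degenerate because the two corners are then equidistant.

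\emph{Reducing the approximation ratio to one dimension.} As in Theorem~\ref{t1} I would first take $z_0=0$, and, reflecting the $x$- and $y$-axes if necessary, assume the mechanism outputs $(\hat x,\hat y)=(0,0)$; this means the weighted majority lies in the far halves, $\sum_{i:x_i\ge A}w_i\ge W/2$ and $\sum_{i:y_i\ge B}w_i\ge W/2$ with $W=\sum_i w_i$, and, as argued just before the theorem, the optimum $(x^*,y^*)$ is one of the four corners. Using the convexity bound $(a+b)^{\alpha/2}\le 2^{(\alpha-2)/2}(a^{\alpha/2}+b^{\alpha/2})$ termwise in the numerator $SU(x^*,y^*)$, the superadditivity bound $(a+b)^{\alpha/2}\ge a^{\alpha/2}+b^{\alpha/2}$ termwise in $SU(\hat x,\hat y)$ (both valid for $\alpha\ge 2$, exactly as in the chain~(\ref{a52})), and then the mediant inequality, I obtain $SU(x^*,y^*)/SU(\hat x,\hat y)\le 2^{(\alpha-2)/2}\max\{R_x,R_y\}$, where $R_x=(\max_{c\in\{0,2A\}}\sum_i w_i|x_i-c|^{\alpha})/(\sum_i w_i x_i^{\alpha})$ is precisely the approximation ratio of the one-dimensional restriction of Mechanism~\ref{m4} and $R_y$ is its $y$-analogue. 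Note this reduction is lossless at $\alpha=2$, where $2^{(\alpha-2)/2}=1$ and the objective decouples over the coordinates — that is where the clean constant will originate.

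\emph{The one-dimensional estimate and the main obstacle.} It then remains to bound $R_x$ (and identically $R_y$) by $5$. If the optimal $x$-corner equals $\hat x=0$ this is trivial; otherwise it is $2A$, and I would split the users at the midpoint $A$. For $x_i\ge A$ we have $|x_i-2A|\le A\le x_i$, so these users' contribution to the numerator is at most their contribution to the denominator — a factor $1$. For $x_i<A$ we have $|x_i-2A|<2A$ and their total weight is at most $W/2\le\sum_{i:x_i\ge A}w_i$, so from $\sum_i w_ix_i^{\alpha}\ge A^{\alpha}\sum_{i:x_i\ge A}w_i$ their contribution to the numerator is at most $2^{\alpha}A^{\alpha}\cdot(\text{minority weight})$, i.e.\ a bounded multiple of the denominator, equal to $4$ when $\alpha=2$. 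Adding, $R_x\le 5$, hence $SU(x^*,y^*)\le 5\cdot 2^{(\alpha-2)/2}SU(\hat x,\hat y)$. The delicate point, which I expect to be the real work, is this weighted-minority estimate and its coupling to the exponent $\alpha$: one must arrange that the factor-$4$ blow-up of the minority term (the farthest interval endpoint being within a factor $2$ of the midpoint) and the $2^{(\alpha-2)/2}$ lost in passing from the Euclidean objective to the separable one combine to exactly $5\cdot 2^{(\alpha-2)/2}$ rather than to the looser bound the naive splitting produces; one must also verify that the axis reflections really preserve Mechanism~\ref{m4}'s decision rule and handle the boundary point $x_i=A$ and even-$n$ tie-breaking carefully in the strategyproofness step.
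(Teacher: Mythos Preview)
Your approach is essentially the paper's: coordinate-wise strategyproofness via the majority argument, then decoupling the $\alpha/2$-power objective into separate $x$- and $y$-sums using Lemmas~\ref{l1}--\ref{l2} (your superadditivity/convexity bounds) and bounding each one-dimensional ratio by the weighted-majority split at the midpoint $A$. Your final-paragraph worry is exactly on target, and there is no extra trick you are missing: in~(\ref{a11}) the paper passes from $A^{\alpha}+(2A)^{\alpha}$ to $5A^{\alpha}$, an identity only at $\alpha=2$, so for $\alpha>2$ the written argument (yours and the paper's) actually delivers $(1+2^{\alpha})\,2^{(\alpha-2)/2}$ rather than the stated $5\cdot 2^{(\alpha-2)/2}$.
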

The proof of Theorem \ref{t4} is given in Appendix \ref{app_4}.

\begin{mechanism}\label{m1.1}
	The UAV strategically decides its location $f=(x,y,z_0)$, where
	\begin{align}
& x= \left\{ \begin{array}{ll}
	0, & \mbox{if $\sum_{i:x_i\in [0,A)}1\leq\sum_{i:x_i\in [A,2A]}1$};\\
	2A, & \mbox{otherwise},\end{array} \right.\nonumber
\end{align}
\begin{align}
& y= \left\{ \begin{array}{ll}
	0, & \mbox{if $\sum_{i:y_i\in [0,B)}1\leq\sum_{i:y_i\in [B,2B]}1$};\\
	2B, & \mbox{otherwise},\end{array} \right.\nonumber
\end{align}
\end{mechanism}

	Mechanism \ref{m1.1} is a strategyproof mechanism with approximation ratio $\gamma=5\frac{w_{max}}{w_{min}}2^{(\alpha-2)/2}$ in the obnoxious UAV placement game.
Mechanism \ref{m1.1} treats each user equally and does not consider
users' weights.
Since the UAV does not need to
gather the information of weights from users, it is strategyproof
even if we allow users to misreport their weights.
\subsection{Empirical analysis of Mechanism \ref{m4}}
In this subsection, we present empirical analysis to evaluate the average performances of Mechanism \ref{m4} given $\alpha=2$.
We have $SU(f,(\textbf x,\textbf y))=\sum_{i=1}^n w_i ((x_i-x)^2+(y_i-y)^2+z_0^2).$
We split social utility as $SU=SU_x+SU_y+\sum_{i=1}^n w_i z_0^2.$ 
By using weighted mean $\bar{x}$ in (\ref{a17}), we rewrite  the social utility (\ref{a15}) in $x$-domain as
\begin{align}
&SU_x(f,\textbf x)=\sum_{i=1}^n w_i ((x_i-\bar{x})+(\bar{x}-x))^2\nonumber\\
&=\sum_{i=1}^n w_i((x_i-\bar{x})^2+(\bar{x}-x)^2)
+2(\bar x-x)\sum_{i=1}^n w_i(x_i-\bar x),\nonumber
\end{align}
where the last summation term 
is zero due to (\ref{a17}).
Thus, we can rewrite 
$SU_x(f,\textbf x)=(x-\bar{x})^2\sum_{i=1}^n w_i+\sum_{i=1}^n w_i(x_i-\bar{x})^2.$
Similarly, we can obtain $SU_y$ and finally $SU$ as
\begin{align}
&SU(f,(\textbf x,\textbf y))
=((x-\bar{x})^2+(y-\bar{y})^2)\sum_{i=1}^n w_i\nonumber\\
&+\sum_{i=1}^n w_i ({(x_i-\bar{x})}^2+{(y_i-\bar{y})}^2+z_0^2). \nonumber
\end{align}
We can see that $SU(f,(\textbf x,\textbf y))$ is linear with the square of the distance between 
$(x,y)$ and
$(\bar{x},\bar{y})$.
Obviously, we obtain the optimal UAV location as
\begin{equation} \label{a9}
x_{opt}=\left\{ \begin{array}{ll}
0, & \mbox{if $\bar{x}\geq A$};\\
2A, & \mbox{if $\bar{x}<A$},\end{array} \right.
y_{opt}=\left\{ \begin{array}{ll}
0, & \mbox{if $\bar{y}\geq B$};\\
2B, & \mbox{if $\bar{y}<B$}.\end{array} \right.
\end{equation}

\begin{theorem}\label{t7}
	Given $\alpha=2$, as the number $n$ of users  goes to infinity, the probability that UAV location $(x,y,z_0)$ under Mechanism \ref{m4} equals the social optimal location goes to $1$, given all $x_i$'s,  $y_i$'s and $w_i$'s are independent and identically distributed, respectively, and all $x_i$'s, and $y_i$'s follow any continuous asymmetric distributions (including Beta distribution and skew normal distribution), respectively.
\end{theorem}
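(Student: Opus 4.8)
The plan is to reduce the statement to two independent one-dimensional claims and then invoke the weak law of large numbers twice in each coordinate. Since the additive term $\sum_{i=1}^n w_i z_0^2$ does not depend on $(x,y)$, and $SU$ separates as $SU_x(f,\textbf x)+SU_y(f,\textbf y)+\sum_{i=1}^n w_i z_0^2$, both the social optimum and the output of Mechanism~\ref{m4} choose their $x$-corner in $\{0,2A\}$ and their $y$-corner in $\{0,2B\}$ independently. Hence it suffices to prove that, in the $x$-coordinate (the $y$-coordinate being identical, and $z_0$ being common), the probability that Mechanism~\ref{m4}'s choice of $x$ coincides with the optimal choice given in (\ref{a9}) tends to $1$; the intersection of two events each of probability tending to $1$ again has probability tending to $1$.

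Fix the $x$-marginal law $D$ and write $\mu=\E[x_1]$, let $F$ be its CDF, and let $m$ be its median, so $F(m)=\tfrac12$. The structural fact I will use is that, for the asymmetric families in the statement (Beta, skew normal, and the like), $\mu\neq A$, $m\neq A$, and $\mu$ and $m$ lie on the same side of $A$: for a Beta law on $[0,2A]$ the sign of the difference of its two shape parameters simultaneously controls the position of $\mu$ and of $m$ relative to the midpoint $A$, and analogously the sign of the shape parameter does so for the skew normal. Relabelling $x\mapsto 2A-x$ if needed, we may assume $\mu>A$ and $m>A$, i.e. $F(A)<\tfrac12$. Then: (i) since $w_i>0$ with $\E[w_1]\in(0,\infty)$ and $x_i\in[0,2A]$ is bounded, $\tfrac1n\sum_i w_i\xrightarrow{p}\E[w_1]$ and, because $w_i$ is independent of $x_i$, $\tfrac1n\sum_i w_i x_i\xrightarrow{p}\E[w_1]\mu$, so by the continuous mapping theorem $\bar x=\frac{\sum_i w_i x_i}{\sum_i w_i}\xrightarrow{p}\mu>A$, and by (\ref{a9}) the optimal $x$-corner equals $0$ with probability tending to $1$; (ii) likewise $\tfrac1n\sum_{i:x_i<A}w_i=\tfrac1n\sum_i w_i\mathbf 1\{x_i<A\}\xrightarrow{p}\E[w_1]F(A)$ and $\tfrac1n\sum_{i:x_i\ge A}w_i\xrightarrow{p}\E[w_1](1-F(A))$, and since $F(A)<\tfrac12<1-F(A)$ these limits are strictly ordered, so with probability tending to $1$ we have $\sum_{i:x_i<A}w_i\le\sum_{i:x_i\ge A}w_i$ and Mechanism~\ref{m4} also outputs $x=0$. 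Combining (i), (ii) and the analogous $y$-coordinate statement yields the theorem.

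I expect the main obstacle to be exactly the structural fact invoked above. Mechanism~\ref{m4} compares total weights on the two halves, which in the limit is a weighted-\emph{median} rule, whereas the optimal corner in (\ref{a9}) is governed by the weighted-\emph{mean} $\bar x$; these coincide asymptotically if and only if the mean and the median of $D$ lie on the same side of the split point $A$. This is where asymmetry is essential: if $D$ is symmetric about $A$ then $\mu=m=A$ and the corner choice degenerates to an essentially coin-flip near-tie in \emph{both} rules, so the two stop agreeing with probability tending to $1$ --- mirroring the fact that Theorem~\ref{t3} requires symmetry while the present theorem requires asymmetry. For a completely arbitrary asymmetric $D$ the same-side property can fail (one may place slightly more than half the mass just below $A$ while pulling the mean above $A$), so the argument really does use the structure of the named families; verifying the same-side property for Beta and skew normal is a short elementary computation that I would carry out in full.
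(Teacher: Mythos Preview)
Your proposal is correct and follows essentially the same skeleton as the paper: reduce to one coordinate, note that the optimal corner is governed by the weighted mean $\bar x$ (equation (\ref{a9})) while Mechanism~\ref{m4}'s corner is governed by a weighted-median-type comparison at the split point $A$, and then argue both converge so that the two rules agree asymptotically provided the population mean and median lie on the same side of $A$.

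The only technical difference is in how the mechanism side is handled. The paper rephrases Mechanism~\ref{m4} as ``$x=0$ iff $x_{wmed}\ge A$'' and invokes, from the proof of Theorem~\ref{t3}, that $x_{wmed}\stackrel{P}{\to}H^{-1}(\tfrac12)$; you instead apply the WLLN directly to $\tfrac1n\sum_i w_i\mathbf 1\{x_i<A\}$ and $\tfrac1n\sum_i w_i\mathbf 1\{x_i\ge A\}$. Your route is slightly more self-contained (it does not rely on the weighted-median convergence established only implicitly in Appendix~\ref{app_3}) and yields the same conclusion. You are also more careful than the paper about the structural ``same-side'' hypothesis: the paper simply asserts that for asymmetric distributions the mean and the median lie on the same side of $A$, whereas you correctly observe that this can fail for a general asymmetric law and is really a property of the named families (Beta, skew normal), to be checked separately. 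That caveat is well taken and strengthens rather than weakens your argument.
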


\begin{figure}
\centering
\subfigure[Distributions of users' locations $x_i$'s in $x$-domain]{
\label{fig3a}
\includegraphics[height=7.0cm, width=8.5cm]{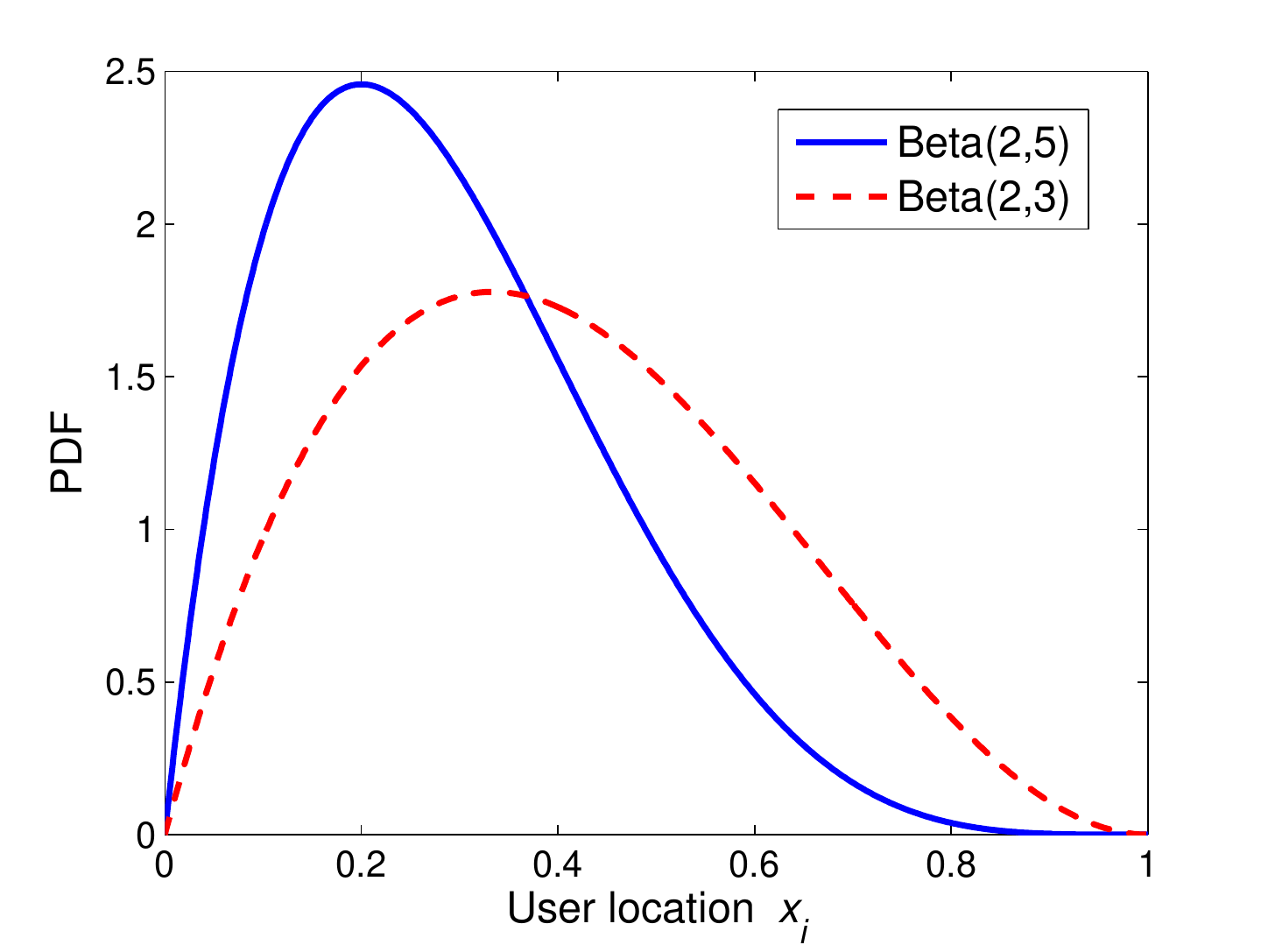}}

\hspace{-0.00in}

\subfigure[Probability that Mechanism \ref{m4} is equal to the optimal location]{
\label{fig3b} \includegraphics[height=7.0cm, width=8.5cm]{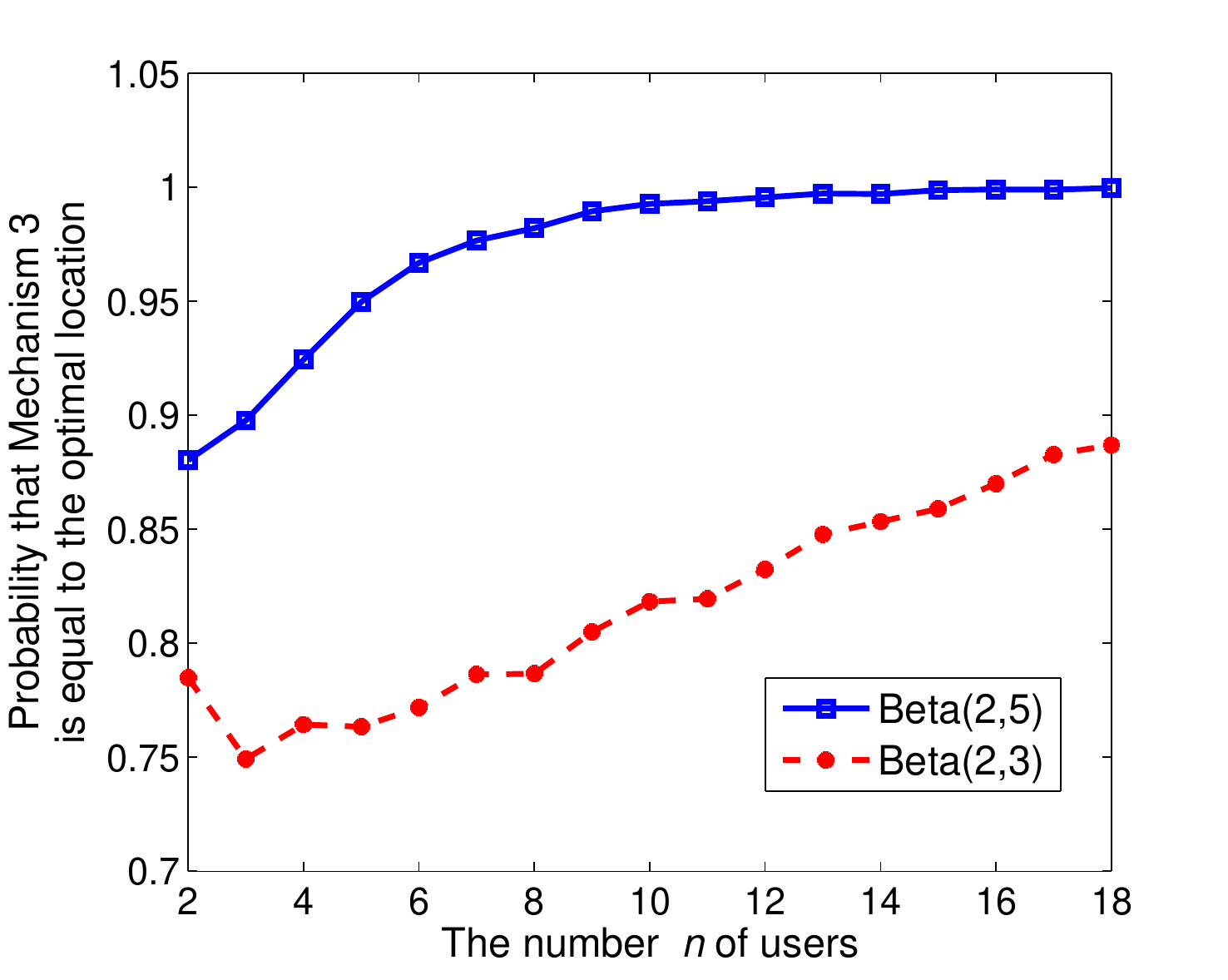}}

\hspace{-0.00in}

\subfigure[Mean social utility of Mechanism \ref{m4} versus mean optimal social utility under Beta(2,5) and Beta(2,3).]{
\label{fig3c} \includegraphics[height=7.0cm, width=8.5cm]{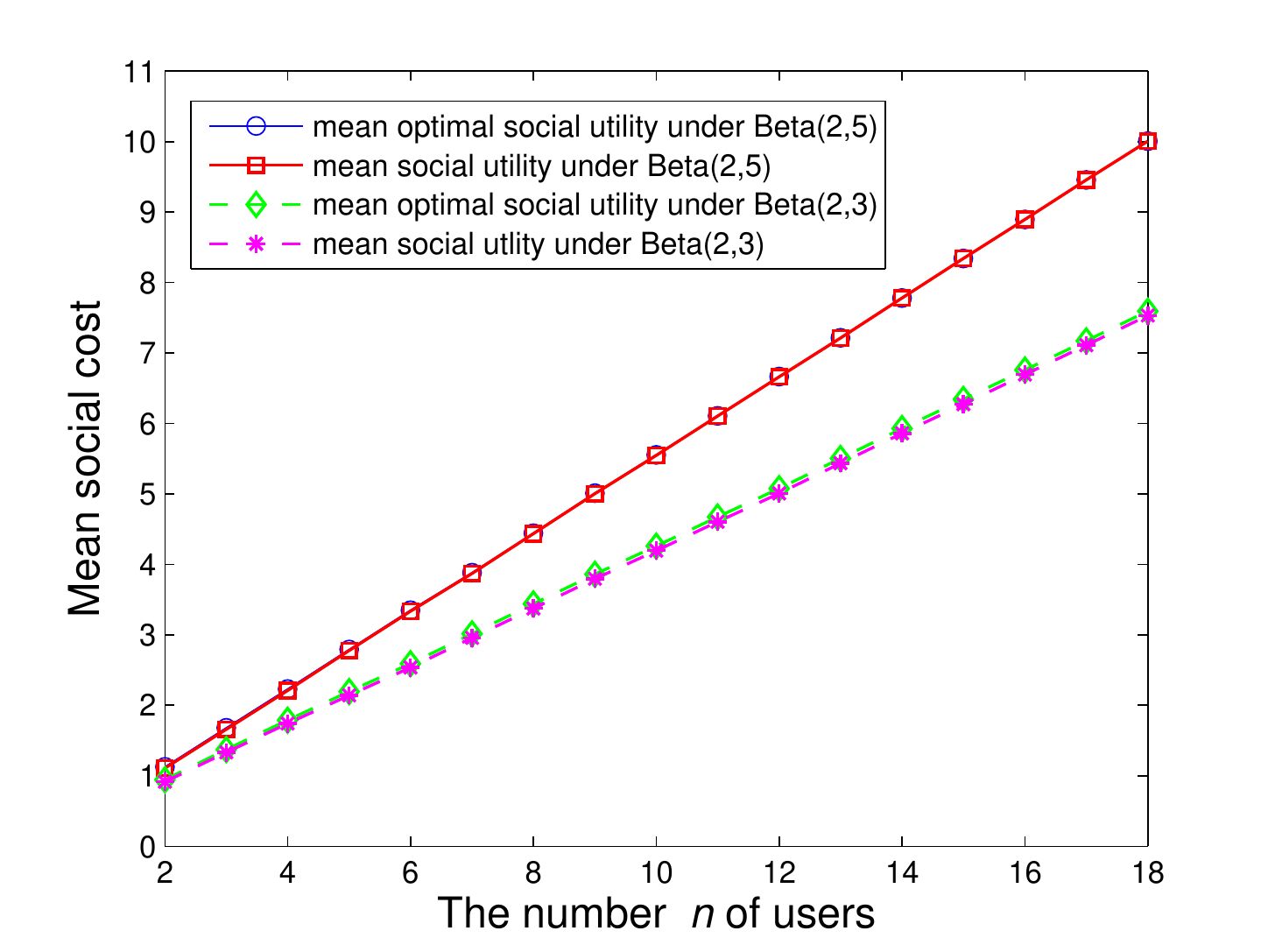}}

\caption{Skewness of two asymmetric random distributions of users' locations, and Mechanism \ref{m4}'s convergence rate to the optimal placement under the two distributions and the number $n$ of users.} \label{fig3}
\end{figure}

  \begin{proof}
	We only consider $x$-domain location analysis due to symmetry.
	Compared with $x_{opt}$ in (\ref{a9}), Mechanism \ref{m4} in $x$-domain can be rewritten as:
	\begin{align}
	x= \left\{ \begin{array}{ll}
	0, & \mbox{if } x_{wmed}\geq A;\\
	2A, & \mbox{otherwise}.\end{array} \right.\nonumber
	\end{align}
	In the proof of Theorem \ref{t3} in Appendix \ref{app_3}.
we have proved that $\bar{x}$ converges in probability towards the expectation of $x_1$
	(i.e., $\bar{x}\stackrel{P}{\rightarrow}\E[x_1]$, as $n\rightarrow \infty$) and $x_{wmed}$ converges in probability towards the median of $x_1$
	(i.e., $x_{wmed}\stackrel{P}{\rightarrow}H^{-1}(\frac{1}{2})$, as $n\rightarrow \infty$).
	Since all $x_i$'s follow continuous asymmetric distribution, respectively, mean of $x_i$'s and median of $x_i$'s lie in either $[0,A)$ or $[A,2A]$ simultaneously.
	Without loss of generality, we consider that they all lie in  interval $[0,A)$.
	 Considering the optimal UAV location, we have $\E[x_1]\in$ $ [0,A)$ and thus $\bar{x} \stackrel{P}{\in}$$ [0,A)$, as $n\rightarrow \infty$. According to (\ref{a9}), the optimal UAV location  satisfies
	$x_{opt}\stackrel{P}{\rightarrow}2A$, as $n\rightarrow \infty$.
	Considering UAV location in Mechanism \ref{m4}, we have $H^{-1}(\frac{1}{2})$ $\in[0,A)$ and thus $x_{wmed}\stackrel{P}{\in}$$ [0,A)$, as $n\rightarrow \infty$.
	According to Mechanism \ref{m4}, the optimal UAV location  is that
	$x\stackrel{P}{\rightarrow}2A$, as $n\rightarrow \infty$.
	Therefore, 	as $n\rightarrow \infty$, $\Pr[x=x_{opt}]{\rightarrow}1$ in Mechanism \ref{m4}.
\end{proof}

Comparing Theorems \ref{t3} and \ref{t7}, we can see that  Theorem \ref{t3} needs the symmetric location distribution condition for type 1 users, which makes sure the UAV location in Mechanisms \ref{m1} and \ref{m2} can approach the optimal point. While  Theorem \ref{t7} needs asymmetric location distribution condition for type 2 users, which makes sure the UAV location in both Mechanism \ref{m4} and the optimal point can diverge in the same direction towards the same corner.

We  provide empirical simulations in Fig.\ref{fig3} for Mechanism \ref{m4} when $n$ is finite. For simplicity, we assume $I^3=[0,1]^3$, each user's location follows asymmetric Beta distribution in $I^3$ and every $w_i$ follows the continuous uniform distribution in $[0,1]$.

We have two user groups' location distributions for simulation comparisons.
We can see from Fig.\ref{fig3a} that random distribution Beta(2,5) has larger skewness and is more asymmetric
than Beta(2,3), and thus provides faster convergence rate for Mechanism \ref{m4} towards the social optimum, as observed from  Fig.\ref{fig3b}.  This is consistent with Theorem \ref{t7}. Intuitively, a larger skewness of users' distribution implies a higher probability for UAV location of Mechanism \ref{m4} to be equal to the optimal UAV location.

Fig.\ref{fig3c} shows that
the mean social utility of Mechanism \ref{m4} versus the mean optimal social utility under distributions Beta(2,5) and Beta(2,3).
The mean social utility achieved by our Mechanism \ref{m4} and the mean optimal social utility under both Beta(2,5) and Beta(2,3) increase linearly as the number of users increases.
No matter which number of users we are looking at, the performance gap between the mean social utility achieved by Mechanism \ref{m4} and the mean optimal social utility is quite small, telling that our Mechanism \ref{m4} approximates the optimal solution well in average sense.

\section{Dual-Preference UAV Placement Game for Both Types of Users }\label{sectiondpflg}

In this section, we design a strategyproof mechanism in the dual-preference UAV placement game where both types of users co-exist. Without loss of generality, we assume all users' weights as $1$, i.e., $w_i=1$ for any $i\in N$, and our results (though more involved)  can also be extended to the weighted case in any distributions.

As shown in Fig.\ref{fig1}, each user has his own preference type and we define user $i$'s type as $\theta_i$ which is either $1$ or $2$.
 A user $i$ with $\theta_i=1$ (facility user)  prefers  to be close to the UAV and a user $i$  with $\theta_i=2$ (adverse user) prefers  to be far away from the UAV.
We denote $\Theta=\{\theta_1,\dots,\theta_n\}$ as the profile of all $n$ users' preferences.
Now the UAV needs to gather information of users' preference types  besides users' locations to determine $(x,y,z_0).$
Note that a user may also cheat on his report of preference type, and this adds difficulty to the UAV's strategyproof mechanism design. 	
Given the location of the UAV $(x,y,z_0)$, we define a user $i$'s utility as
\begin{align}
&u_i((x,y,z_0), (x_i,y_i,\theta_i))\nonumber\\
=&\begin{cases}
((2A)^2-(x_i-x)^2+(2B)^2-(y_i-y)^2+z_0^2)^{\alpha/2},
\nonumber\\
\text{\ \ \ \ \ \ \ \ \ \ \ \ \ \ \ \ \ \ \ \ \ \ \ \ \ \ \ \ \ \ \ \ \ \ \ \ \ \ \ \ \ \ \ \ \ \ \ \ \ \ \ \ \ \ \ \ \ \ \ \ \ \ \ \ \  if }\theta_i = 1;\nonumber\\
 ((x_i-x)^2+(y_i-y)^2+z_0^2)^{\alpha/2}, \text{\ \ \ \ \ \ \ \ \ \ \ \ \ \  if } \theta_i = 2.\nonumber\\
 \end{cases}
\end{align}

Note that we want to minimize the service cost for a type 1 user (facility user) as in Section \ref{sectionflg}, which is equivalent to maximizing the user's utility  $-(x_i -x)^2-(y_i -y)^2$.
To make our definition of approximation ratio meaningful, we require
nonnegative utilities and add $(2A)^2+(2B)^2$ to the utility. This technique is widely used (e.g., \cite{zou2015facility}) and does not change our main results.

\begin{definition}\label{d3}
	A mechanism is strategyproof in the dual-preference UAV placement game if no user can benefit from misreporting his location and preference type. Formally, given location profile $\Omega=(< x_i,\textbf x_{-i}>, <y_i,\textbf y_{-i}>)\in I^{2n}$, preference profile $\Theta$, and any misreported location $(x_i'  ,y_i')\in I^2$ and preference type $\theta_i'$ for user $i\in N$, it holds that
\begin{align}
	&u_i( f((x_i,y_i,\theta_i),({\textbf x_{-i}},\textbf y_{-i},\Theta_{-i})),(x_i,y_i,\theta_i))\nonumber\\
	\geq & u_i( f((x_i',y_i',\theta_i' ),(\textbf x_{-i},\textbf y_{-i},\Theta_{-i})),(x_i,y_i,\theta_i)). \nonumber
\end{align}
\end{definition}

Given a location profile $\Omega$, let $OPT_3(\textbf x,\textbf y, \Theta)$ be the optimal social utility in this game.
A strategyproof mechanism $f$ has an approximation ratio $\gamma\geq 1$, if for any location profile $(\textbf x,\textbf y, \Theta)$ and $\Theta,$ $OPT_3(\textbf x,\textbf y, \Theta)\leq \gamma SU(f,(\textbf x,\textbf y, \Theta))$.

The objective of this game is to $\max_{x,y} \sum_{i=1}^n $ $u_i((x,y,z_0),$ $(x_i,y_i,\theta_i)).$
We can see that the social utility function is quadratic and it is not difficult to derive the optimal UAV location by checking the first-order condition.
However, outputting the optimal location is not a strategyproof mechanism and we needs to design a stragetyproof mechanism.

\subsection{Design and analysis of strategyproof mechanism}

\begin{mechanism}\label{m7}
	Consider $x$-domain and define two user sets for each preference type:
 \begin{align}
 &R_{1}=\{i :\theta_i = 1, x_i\geq A\}, R_{2}= \{i:\theta_i = 2, x_i> A\},\nonumber\\
 &L_{1}=\{i :\theta_i = 1, x_i< A\}, L_{2} = \{i:\theta_i = 2, x_i\leq A\}.\nonumber
 \end{align}
The $x$-location of the UAV is $x=2A$ if $|R_{1}|+|L_{2}|\geq|R_{2}|+|L_{1}|$ and
	$x=0$ otherwise.
Consider $y$-domain and define two user sets for each preference type:
 \begin{align}
 &\bar R_{1}=\{i :\theta_i = 1, y_i\geq B\}, \bar R_{2}= \{i:\theta_i = 2, y_i> B\},\nonumber\\
 &\bar L_{1}=\{i :\theta_i = 1, y_i< B\}, \bar L_{2} = \{i:\theta_i = 2, y_i\leq B\}.\nonumber
 \end{align}
The $y$-location of the UAV is $y=2B$ if $|\bar R_1|+|\bar L_2|\geq|\bar R_{2}|+|\bar L_{1}|$ and
	$y=0$ otherwise.
\end{mechanism}

In Mechanism \ref{m7}, users in sets $R_1$ and $L_2$ prefer the UAV to locate at $x=2A$, while the other users prefer the UAV to locate at $x=0$. We follow the majority rule to design Mechanism \ref{m7}.
Actually, Mechanism \ref{m7} for the dual-preference UAV placement game is inspired by Mechanism \ref{m4} for the obnoxious UAV placement game, yet this new mechanism further considers that users may also cheat on their preference types besides locations.

\begin{theorem}\label{t8}
	Mechanism \ref{m7} is a stragegyproof mechanism with approximation ratio $\gamma=2^{3\alpha/2}$.
\end{theorem}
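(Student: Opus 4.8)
The plan is to treat strategyproofness and the approximation ratio separately, and — as in the proof of Theorem~\ref{t1} — to set $z_0=0$ for the ratio analysis.

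\emph{Strategyproofness.} I work in the $x$-domain (the $y$-domain is identical). Each user $i$, according to his reported $(\theta_i,x_i)$, lies in exactly one of $R_1,R_2,L_1,L_2$, and this membership alone determines which outcome he ``supports'': $x=2A$ for $i\in R_1\cup L_2$, and $x=0$ for $i\in R_2\cup L_1$; the mechanism returns $x=2A$ precisely when at least $n/2$ users support $2A$. I would first verify, case by case, that a truthful report makes $i$ support his genuinely preferred $x$-value: a type-$1$ user prefers the endpoint nearer $x_i$ (namely $2A$ iff $x_i\ge A$) and a type-$2$ user prefers the endpoint farther from $x_i$ (namely $2A$ iff $x_i\le A$) — exactly how the four sets are defined. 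Since a misreport changes only $i$'s own support and can at most toggle it, if the current outcome already equals $i$'s preferred value he cannot improve, and otherwise strictly more than half of the users support the other value even while $i$ truthfully supports his preferred value, so no misreport of $i$ can flip the outcome. For the joint claim about location \emph{and} type, the key observation is that under \emph{any} reported type the user can still choose his $x$-support freely via $x_i'$ and his $y$-support freely via $y_i'$, independently; hence the truthful report already delivers his best attainable support in both domains simultaneously, and since $u_i$ is increasing in both the ``$x$-part'' and the ``$y$-part'' of its argument, no joint deviation helps.

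\emph{Approximation ratio.} Put $z_0=0$. For the upper bound, each user's utility is at most $M:=((2A)^2+(2B)^2)^{\alpha/2}$ for any UAV position (drop the nonnegative subtracted square for a type-$1$ user; use $|x_i-x|\le 2A$ and $|y_i-y|\le 2B$ for a type-$2$ user), so $OPT_3(\textbf x,\textbf y,\Theta)\le nM$. For the lower bound, write $u_i=(P_i^x+P_i^y)^{\alpha/2}$ where $P_i^x,P_i^y\ge 0$ are the ``$x$-part'' and ``$y$-part'' of $i$'s utility, and use superadditivity of $t\mapsto t^{\alpha/2}$ (valid since $\alpha/2\ge1$) to get $u_i\ge (P_i^x)^{\alpha/2}+(P_i^y)^{\alpha/2}$, hence $SU(f,(\textbf x,\textbf y,\Theta))\ge SU^x+SU^y$ with $SU^x:=\sum_i(P_i^x)^{\alpha/2}$ at the mechanism's $x$ and similarly $SU^y$. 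If the mechanism returns $x=2A$, its supporters $R_1\cup L_2$ — at least $n/2$ of them — each have $P_i^x\ge A^2$ (for $i\in R_1$, $x_i\in[A,2A]$ gives $P_i^x=(2A)^2-(2A-x_i)^2\ge 3A^2$; for $i\in L_2$, $x_i\in[0,A]$ gives $P_i^x=(2A-x_i)^2\ge A^2$), and symmetrically when $x=0$; thus $SU^x\ge\tfrac n2 A^\alpha$, likewise $SU^y\ge\tfrac n2 B^\alpha$, so $SU(f,(\textbf x,\textbf y,\Theta))\ge\tfrac n2(A^\alpha+B^\alpha)$.

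Combining, $OPT_3/SU(f,(\textbf x,\textbf y,\Theta))\le \frac{2((2A)^2+(2B)^2)^{\alpha/2}}{A^\alpha+B^\alpha}=\frac{2^{\alpha+1}(A^2+B^2)^{\alpha/2}}{A^\alpha+B^\alpha}\le 2^{\alpha+1}\cdot 2^{\alpha/2-1}=2^{3\alpha/2}$, the last inequality being the power-mean bound $(A^2+B^2)^{\alpha/2}\le 2^{\alpha/2-1}(A^\alpha+B^\alpha)$, which holds for all $\alpha\ge2$ (equality at $\alpha=2$). The part I expect to be the main obstacle is the strategyproofness across the two coupled domains — in particular excluding the possibility that a user sacrifices a worse $x$-support for a better $y$-support by lying about his type — which the ``free and independent supports under any reported type'' observation resolves. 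The ratio bound is then routine once one commits to $z_0=0$, the superadditive split of $SU$ into $x$- and $y$-contributions, and the power-mean inequality; setting $z_0=0$ is necessary because with $z_0>0$ the crude estimates $OPT_3\le nM$ and $SU\ge\frac n2(A^\alpha+B^\alpha+2z_0^\alpha)$ no longer combine to the claimed $2^{3\alpha/2}$.
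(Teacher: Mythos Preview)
Your proof is correct. The strategyproofness argument follows the paper's majority-voting line, though you are more careful than the paper about the coupling between the two coordinates when the \emph{type} is misreported: your observation that under either reported type a user can still pick his $x$-support and $y$-support independently via $(x_i',y_i')$ cleanly justifies the per-domain reduction the paper simply asserts.

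For the approximation ratio you take a genuinely different route. The paper first splits both $OPT_3$ and $SU$ into $x$- and $y$-parts (Lemmas~\ref{l1} and~\ref{l2}), then in each coordinate performs a case analysis on where the optimal UAV lies and exploits the majority assumption $|R_1|+|L_2|<|R_2|+|L_1|$ to bound $OPT_{3,x}$ in terms of $|R_2|$ and $|L_1|$, the same quantities that control $SU_x$ from below; the ratio then collapses to $2^{(\alpha-2)/2}\cdot 2\cdot 2^\alpha=2^{3\alpha/2}$. You instead use the trivial global bound $OPT_3\le n((2A)^2+(2B)^2)^{\alpha/2}$, split only $SU$ via superadditivity, and lower-bound each coordinate using that the at-least-$n/2$ supporters contribute $\ge A^\alpha$ (resp.\ $\ge B^\alpha$); a final power-mean step $(A^2+B^2)^{\alpha/2}\le 2^{\alpha/2-1}(A^\alpha+B^\alpha)$ recombines the dimensions. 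Your approach is shorter and avoids the case split on the optimal location; the paper's approach stays one-dimensional throughout and therefore does not need the recombination step at the end. Both land on exactly $2^{3\alpha/2}$.
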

The proof of Theorem \ref{t8} is given in Appendix \ref{app_5}.


\subsection{Empirical analysis of Mechanism \ref{m7}}

\begin{figure}[t]
	\centerline{\includegraphics[height=7.0cm, width=8.5cm]{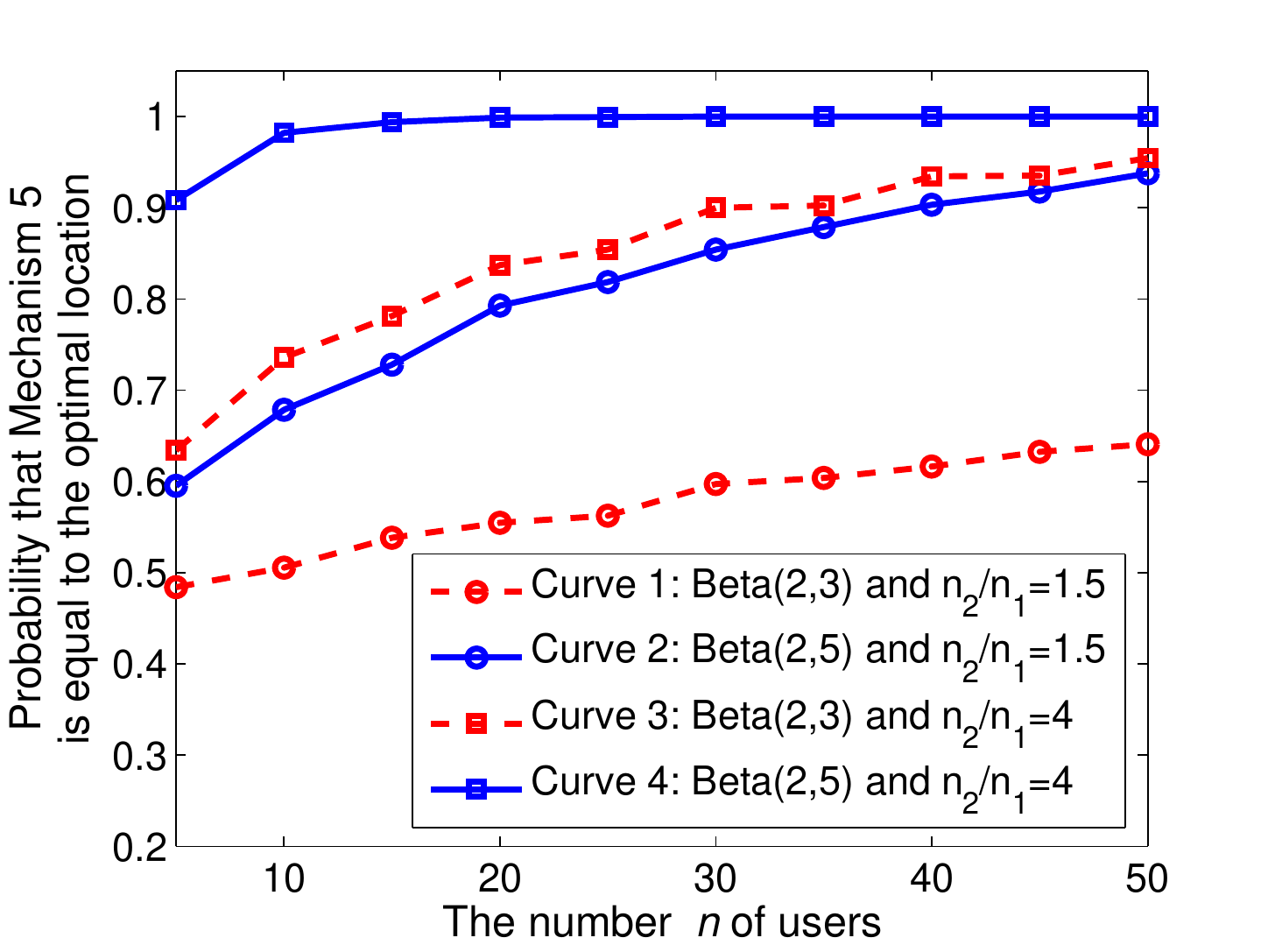}}
	\caption{Mechanism \ref{m7}'s convergence to the optimal placement under two different asymmetric distribution of users' locations and  two different $n_2/n_1$ value of Mechanism \ref{m7} versus convergence rate to the optimal placement.}
\label{fig4}
\end{figure}

In this subsection, we present empirical analysis to evaluate the average performances of Mechanism \ref{m7} for $\alpha=2$. Define $n_1=|\{i:\theta_i=1\}|$, $n_2=|\{i:\theta_i=2\}|$ with $n_1+n_2=n$.

We  provide empirical simulations in Fig.\ref{fig4} for Mechanism \ref{m7} when $n$ is finite.
For simplicity, we assume $I^2=[0,1]^2$, each user's location follows asymmetric Beta distribution in $I^2$ and $n_2/n_1$ is greater than $1$.
We have four curves in the simulations for comparisons. In Curves 1 and  2, $n_2/n_1=1.5,$ and  in Curves 3 and 4, $n_2/n_1=4$ with more type 2 users given the same number of total users. In Curves 1 and 3, each user's location follows Beta(2,3), and in Curves 2 and 4, each user's location follows more asymmetric distribution Beta(2,5) in Fig.\ref{fig3a}.

We can see in Fig.\ref{fig4} that, distribution Beta(2,5) provides faster convergence rate for Mechanism \ref{m7} to approach 1 (social optimum) by comparing  Curves 1 and 2 (or Curves 3 and 4).
Intuitively, a larger skewness of users' distribution tells a higher probability of UAV location under Mechanism \ref{m7} equal to the optimal UAV location.
By comparing  Curves 1 and 3 (or Curves 2 and 4), we can see that the higher value of $n_2/n_1$ with more type 2 users provides faster convergence rate for Mechanism \ref{m7} to approach the social optimum.
As shown in Fig.\ref{fig3c}, we have shown that Mechanism \ref{m4} already
approximates the optimal solution so we omit showing the similar figure with benchmarks here.

\section{Extensions to Multi-UAV Placement}\label{section_multiple}
 In previous sections, we limit our design to a single UAV's deployment.  In this section, we extend the mechanism design of the three placement games by considering multiple cooperative UAVs.
For simplification,
we only consider $\alpha= 2.$
We first study the dual-preference  placement game for two UAVs, and then study the rest two one-preference games for an arbitrary number $k$ of UAVs.

\subsection{The placement game with two dual-preference UAVs}
In this subsection, we design a strategyproof mechanism
for the two-UAV location game with dual-preference for $\alpha= 2.$
Though we only have two UAVs here, we allow the general type setting, where each user has his own preference towards one out of the two different UAVs and a user's preferences over the two UAVs may not be the same. Technically, it is difficult to tackle this more general case with more than one UAV as each user may have diverse
and hidden preferences over different UAVs.
We define preference of  user $i$ to UAV $j$ as $\theta_i^j$ which is either $1$ or $2$ and user $i$ may not truthfully report his preferences. Any user
$i$ with $\theta_i^j=1$ prefers to be close to UAV $j$ and
any user $i$ with $\theta_i^j=2$  prefers to be far away from
 UAV $j$, where $j=1,2$. We denote $\Theta=\{(\theta_1^1,\theta_1^2),(\theta_2^1,\theta_2^2),\dots,(\theta_n^1,\theta_n^2)\}$ as the profile of all $n$
users' preferences. The UAVs need to gather information
of both users' locations $(\textbf x,\textbf y)$ and  preferences $\Theta$ to jointly determine their locations $(X_1,Y_1,Z_1)$ and $(X_2,Y_2,Z_2).$
Given the locations of the two UAVs $((X_1,Y_1,Z_1),(X_2,Y_2,Z_2))=f(\textbf x,\textbf y, \Theta)$, similar to Section \ref{sectiondpflg}, we define user $i$'s utility towards UAV $j$  as
\begin{align}\label{a113}
 u_i^j= \left\{ \begin{array}{ll}
	(2A)^2-(X_j-x_i)^2+(2B)^2\\
-(Y_j-y_i)^2, &\mbox{if } \theta_i^j=1;\\
	(X_j-x_i)^2+(Y_j-y_i)^2+Z_j^2,  &\mbox{if } \theta_i^j=2.\\
\end{array} \right.
\end{align}

The user $i$'s utility is $u_i=u_i^1+u_i^2.$
The social utility of a mechanism $f$ is defined as:
$SU(f((\textbf x,\textbf y), \Theta),((\textbf x,\textbf y), \Theta))=\sum_{i=1}^n u_i(f((\textbf x,\textbf y),\Theta),x_i,\theta_i).$
We still use $OPT_3((\textbf x,\textbf y),\Theta)$ to denote the optimal solution of maximizing $SU$.

We say a mechanism $f$ has an approximation ratio $\gamma$, if for any profile $(\textbf x,\textbf y)\in I^{3n}$ and $\Theta\in \{1,2\}^{2n}$, $ OPT_3((\textbf x,\textbf y),\Theta)\leq \gamma SU(f,(\textbf x,\textbf y),\Theta)$.
To maximize $SU$ and  obtain the optimal solution $OPT$ is difficult,  since $SU$ is neither convex nor concave now. This is different from Section \ref{sectiondpflg} for a single UAV case and adds difficulty to mechanism design. Still, we can show that the optimal solution is not strategyproof since this dual-preference game's special case is the obnoxious UAV placement game.

By considering $n$ users' different $x$-domain locations and mixed preferences towards the two UAVs, we define eight sets of users as shown in Table \ref{setofq}  with $Q_1\cup Q_2\dots\cup Q_8=N$:
\begin{table}[!htbp]
\centering
\caption{Sets   of users $Q_1- Q_8$.}\label{setofq}
\begin{tabular}{|c|c|c|c|c|c|}
\hline

\multicolumn{2}{|c|}{ \multirow{2}*{Set of users} }& \multicolumn{4}{c|}{$\{\theta_i^1,\theta_i^2\}=$} \\
\cline{3-6}
\multicolumn{2}{|c|}{}&\{1,1\}&\{1,2\}&\{2,1\}&\{2,2\}\\
\hline
\multirow{2}*{$x_i\in$}
&$[0,A]$&$Q_1$&$Q_2$&$Q_3$&$Q_4$\\
\cline{2-6}
&$(A,2A]$&$Q_5$&$Q_6$&$Q_7$&$Q_8$\\
\hline
\end{tabular}
\end{table}

We rewrite social utility  as $SU=SU_x+SU_y+SU_z$, where $SU_x(f(\textbf x, \Theta),(\textbf x, \Theta))$ is as follows based on (\ref{a113}) and Table \ref{setofq}:
\begin{align}\label{a131}
 &SU_x(f(\textbf x, \Theta),(\textbf x, \Theta))=\sum_{i=1}^n (u_{i,x}^1+u_{i,x}^2)\nonumber\\
 =&\sum_{i\in Q_1\cup Q_5}(8A^2-(X_1-x_i)^2-(X_2-x_i)^2)\nonumber\\
 &+\sum_{i\in Q_2\cup Q_6}(4A^2-(X_1-x_i)^2+(X_2-x_i)^2)\nonumber\\
&+\sum_{i\in Q_3\cup Q_7}((X_1-x_i)^2+4A^2-(X_2-x_i)^2)\nonumber\\
&+\sum_{i\in Q_4\cup Q_8}((X_1-x_i)^2+(X_2-x_i)^2).
\end{align}
Next, we present our mechanism for the two UAVs.
\begin{mechanism}\label{m110}
If  $|Q_2|+|Q_7|\geq |Q_3|+|Q_6|$, the two UAVs locate their $x$-domain locations to $(X_1, X_2)$ $=(0,2A);$
Otherwise, $(X_1, X_2)=(2A,0).$ $(Y_1, Y_2)$ in the $y$-domain follows the same placement strategy, respectively.
\end{mechanism}

\begin{theorem}\label{t110}
Mechanism \ref{m110} is strategyproof with approximation ratio $\gamma=4$.
\end{theorem}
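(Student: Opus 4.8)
The plan is to prove strategyproofness and the approximation ratio separately, treating the $x$-domain and $y$-domain independently by symmetry (just as in the proofs of Theorems \ref{t4}, \ref{t7}, \ref{t8}). For strategyproofness, I would first observe that the mechanism's decision in the $x$-domain depends only on the integer comparison $|Q_2|+|Q_7|$ versus $|Q_3|+|Q_6|$. I would then check, for a fixed user $i$ with true data $(x_i,\theta_i^1,\theta_i^2)$, that $i$ has no profitable deviation in reporting $x_i'$, $(\theta_i^1)'$, $(\theta_i^2)'$. The key structural fact is that from user $i$'s utility expression \eqref{a113}, $i$'s $x$-part utility is monotone in the placement: a user in $Q_1\cup Q_5$ or $Q_4\cup Q_8$ is indifferent between the two outcomes $(X_1,X_2)=(0,2A)$ and $(2A,0)$ in the $x$-domain (his utility is symmetric), so these users can never gain; a user in $Q_2\cup Q_6$ strictly prefers one outcome and a user in $Q_3\cup Q_7$ the other. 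A user who strictly prefers, say, $(X_1,X_2)=(0,2A)$ is precisely counted on the side $|Q_2|+|Q_7|$ (if $x_i\le A$, he is in $Q_2$; if $x_i>A$, he is in $Q_7$) --- wait, I need to verify the exact set membership: a user in $Q_2$ has $\{\theta_i^1,\theta_i^2\}=\{1,2\}$ and $x_i\le A$, so he wants UAV~1 close (hence at $0$, since $x_i\le A$) and UAV~2 far (hence at $2A$), i.e. he prefers $(0,2A)$; similarly a user in $Q_7$ has $\{2,1\}$ and $x_i>A$, so wants UAV~1 far (at $0$) and UAV~2 close (at $2A$), again preferring $(0,2A)$. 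So the count $|Q_2|+|Q_7|$ is exactly the number of users who strictly prefer $(0,2A)$, and $|Q_3|+|Q_6|$ the number who strictly prefer $(2A,0)$; a single user changing his report can change his own contribution to these counts by at most one, and a standard majority-voting argument then shows he cannot flip the outcome to his strictly preferred side unless it was already within one vote, in which case flipping it still does not strictly help him because he is being counted on that side. I would spell this case analysis out carefully, since the fact that a user's preference type report is also strategic (not just location) is the subtle point here.

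For the approximation ratio, I would work in the $x$-domain and compare $SU_x$ under the mechanism with $OPT_{3,x}$, the best $x$-domain social utility over all four corner combinations $(X_1,X_2)\in\{0,2A\}^2$ (observing first that, because $SU_x$ in \eqref{a131} is convex in each of $X_1,X_2$ separately, the maximum is attained at a corner, so it suffices to compare with the best corner). Since the users in $Q_1\cup Q_5$ and $Q_4\cup Q_8$ contribute the same amount to $SU_x$ regardless of which of $(0,2A)$ or $(2A,0)$ is chosen, the only difference between the mechanism's outcome and the optimal corner comes from the $Q_2\cup Q_6$ and $Q_3\cup Q_7$ terms (and possibly the corners $(0,0)$ and $(2A,2A)$, which I would argue are dominated or handled by a direct bound). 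The mechanism picks the corner favoring the larger of $|Q_2|+|Q_7|$, $|Q_3|+|Q_6|$; I would bound each term of the form $(X_j-x_i)^2$ by $4A^2$ and $8A^2-(X_1-x_i)^2-(X_2-x_i)^2$ from below by $8A^2 - 4A^2 - 4A^2$... more carefully, each squared distance lies in $[0,4A^2]$, so each summand in $SU_x$ lies in a bounded interval, and one shows $SU_x(\text{mechanism}) \ge \tfrac14 OPT_{3,x}$ by a counting/averaging argument exploiting that the mechanism takes the majority side. Doing the same in the $y$-domain and adding (together with $SU_z$, which is placement-independent once $Z_1,Z_2$ are fixed, hence appears identically on both sides), the combined ratio is $\gamma=4$.

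The main obstacle I anticipate is the approximation-ratio bookkeeping: unlike the single-UAV obnoxious game where a clean $5\times 2^{(\alpha-2)/2}$ came out, here the social utility \eqref{a131} mixes "close" terms ($8A^2 - \cdots$, large and positive) and "far" terms ($(X_1-x_i)^2+\cdots$, possibly small), so a naive worst-case term-by-term bound could be loose or could fail to give exactly $4$. I expect the right approach is to split $SU_x$ into the part that is invariant under swapping the two UAVs (the $Q_1\cup Q_5$ and $Q_4\cup Q_8$ contributions, plus a symmetric "average" part of the mixed terms) and the part that the mechanism optimizes by majority vote; bound the invariant part trivially (it is the same in $OPT$ and in the mechanism) and bound the optimized part by noting the mechanism gets at least half of the best achievable difference. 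Reconciling this with the corners $(0,0)$ and $(2A,2A)$ --- which the mechanism never selects but which might beat both of the mechanism's candidate corners --- is the delicate step, and I would handle it by showing that for the worst-case instance the optimum is attained at one of $(0,2A)$ or $(2A,0)$, or else by absorbing the gap into the factor $4$ via the uniform bounds $0\le (X_j-x_i)^2\le 4A^2$ on every distance term.
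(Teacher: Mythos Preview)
Your strategyproofness argument is essentially the paper's: identify that users in $Q_2\cup Q_7$ strictly prefer $(X_1,X_2)=(0,2A)$, users in $Q_3\cup Q_6$ strictly prefer $(2A,0)$, users in $Q_1\cup Q_4\cup Q_5\cup Q_8$ are indifferent, and then run the majority-voting argument. (Your initial phrasing ``$Q_2\cup Q_6$ versus $Q_3\cup Q_7$'' is a slip, but you correct it two lines later.)

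For the approximation ratio your route diverges from the paper's and is more complicated than necessary. You plan to locate $OPT_{3,x}$ at one of the four corners, split $SU_x$ into a swap-invariant piece and an ``optimized'' piece, and then worry about the corners $(0,0)$ and $(2A,2A)$ that the mechanism never selects. The paper sidesteps all of this with a much cruder but cleaner trick: it never asks \emph{where} $OPT_{3,x}$ is attained. Since every user's $x$-utility contribution in \eqref{a131} is at most $8A^2$ regardless of $(X_1,X_2)$, one has directly
\[
OPT_{3,x}\ \le\ 8A^2 n\ =\ 8A^2\sum_{j=1}^{8}|Q_j|\ \le\ 8A^2\bigl(|Q_1|+|Q_4|+|Q_5|+|Q_8|\bigr)+16A^2\bigl(|Q_2|+|Q_7|\bigr),
\]
the second inequality using only the majority hypothesis $|Q_3|+|Q_6|\le|Q_2|+|Q_7|$. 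On the mechanism side the paper lower-bounds each user group term-by-term at $(X_1,X_2)=(0,2A)$: $x_i^2+(2A-x_i)^2\ge 2A^2$ for $Q_4\cup Q_8$, $8A^2-x_i^2-(2A-x_i)^2\ge 4A^2$ for $Q_1\cup Q_5$, and $4A^2\mp x_i^2\pm(2A-x_i)^2\ge 4A^2$ for $Q_2$ and $Q_7$ (the $Q_3,Q_6$ terms are dropped as nonnegative). This yields
\[
SU_x(0,2A)\ \ge\ 2A^2\bigl(|Q_1|+|Q_4|+|Q_5|+|Q_8|\bigr)+4A^2\bigl(|Q_2|+|Q_7|\bigr),
\]
and the ratio $4$ is immediate. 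Your invariant/optimized decomposition and concern about the $(0,0),(2A,2A)$ corners are thus unnecessary detours; the global per-user cap $8A^2$ absorbs every corner at once, and the majority inequality is the only structural fact used. Your ``half the optimized difference'' heuristic, as you suspected, does not directly produce the factor~$4$.
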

The proof of Theorem \ref{t110} is  given in Appendix  \ref{app_6}.
If there are $k>2$ UAVs in the dual-preference placement game, the mechanism design and  analysis will be more involved by including $2^{k+1}$ user sets and are left for future study. Now, we only have $8$ user sets $Q_1 - Q_8$ for $k=2.$ Next, we focus on mechanism design for one-preference $k$-UAV placement games.

\subsection{The obnoxious and favorable $k$-UAV placement game}

 We now study the  obnoxious multi-UAV placement game for $\alpha= 2.$ There are $k\geq 2$ obnoxious UAVs, where each user of type 2  wants to maximize his utility to keep  away from the UAVs.
The location of the $j$-th UAV is $(X_j, Y_j, Z_j)$ with $j=1,2,\dots, k.$
Since each user $i$ wants to reduce the total  interference from all the UAVs,   the utility of user $i$ should be the total utility of user $i$ given $k$ UAVs' locations, i.e.,
\begin{align}
	u_i(f(\textbf x,\textbf y),(x_i,y_i))
= \sum_{j=1}^k w_i ((x_i-X_j)^2+(y_i-Y_j)^2+Z_j^2).\nonumber
\end{align}
Similar to one obnoxious UAV placement game in Section \ref{sectionoflg}, the social utility is the total utility of $n$ users and $k$ UAVs cooperate to maximize the social utility by jointly choosing their locations.
The optimal solution is not strategyproof as the optimal solution with one obnoxious  UAV is not strategyproof as shown in Section \ref{sectionoflg}.
Our objective is to design a strategyproof mechanism with a small approximation ratio.
\begin{mechanism}\label{m10}
Given any location profile $(\textbf x,\textbf y)\in I^{2n},$ if $k$ is even, the locations of $k$ UAVs are $(X_j, Y_j, Z_j)=(0, 0, Z_j)$ with $j=1,2,\dots,\frac{k}{2}$, and $(X_j, Y_j, Z_j)=(2A,2B, Z_j)$ with $j=\frac{k}{2}+1, \dots, k;$ if $k$ is odd, the locations of $k$ UAVs are $(X_j, Y_j, Z_j)=(0,0,Z_j)$ with $j=1,2,\dots,\frac{k+1}{2}$, and $(X_j, Y_j, Z_j)=(2A,2B,Z_j)$ with $j=\frac{k+1}{2}+1, \dots, k.$
\end{mechanism}
\begin{theorem}\label{t10}
	Mechanism \ref{m10} is a stragegyproof mechanism with approximation ratio
	\[\gamma= \left\{ \begin{array}{ll}
	2, &\mbox{ if } k\mbox{ is even};\\
	\frac{2k}{k-1}, &\mbox{ if } k\mbox{ is odd}.\end{array} \right.\]
\end{theorem}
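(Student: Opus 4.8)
The plan is to dispatch strategyproofness immediately and then concentrate on the approximation ratio via a coordinate-wise decomposition of the social utility. For strategyproofness I would simply note that Mechanism \ref{m10} returns a fixed set of $k$ UAV locations (the corners $(0,0,Z_j)$ and $(2A,2B,Z_j)$ in fixed proportions) that does not depend on the reported profile $(\textbf x,\textbf y)$ at all; since no user can alter the outcome, no user can profit from misreporting, so truthfulness follows in one line.

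For the ratio I would first use $\alpha=2$ to split $SU = SU_x + SU_y + SU_z$, where $SU_x = \sum_{i} w_i \sum_{j}(x_i-X_j)^2$, $SU_y$ is its $y$-analogue, and $SU_z = \big(\sum_i w_i\big)\big(\sum_j Z_j^2\big)$ is independent of the horizontal placement. The heights $Z_j$ are fixed parameters, so $SU_z$ is the same additive constant in the mechanism's value and in the optimum; since $\gamma\ge 1$ it only helps, so it suffices to bound $OPT^x_2/SU_x$ and $OPT^y_2/SU_y$. For the optimum I would argue that each UAV's contribution $\sum_i w_i(x_i-X_j)^2$ is convex in $X_j$ and that the $k$ maximizations decouple; hence the optimal placement drives every $X_j$ to whichever endpoint of $[0,2A]$ is larger, so $OPT^x_2 = k\max\{a,b\}$ with $a:=\sum_i w_i x_i^2$ and $b:=\sum_i w_i (x_i-2A)^2$. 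The mechanism, by contrast, places $\lceil k/2\rceil$ UAVs at $x=0$ and $\lfloor k/2\rfloor$ at $x=2A$, so $SU_x = \lceil k/2\rceil\,a + \lfloor k/2\rfloor\,b$.

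The remaining step is the elementary maximization of $k\max\{a,b\}\big/\big(\lceil k/2\rceil a+\lfloor k/2\rfloor b\big)$ over $a,b\ge 0$: for even $k$ this equals $2\max\{a,b\}/(a+b)\le 2$, and for odd $k$ it is at most $\tfrac{2k}{k-1}$, attained when the heavier batch of $\tfrac{k-1}{2}$ UAVs sits at the sub-optimal corner (the case $a\ge b$ gives only the smaller value $\tfrac{2k}{k+1}$). Running the identical argument in the $y$-domain and re-adding $SU_z$ yields $OPT_2 = OPT^x_2+OPT^y_2+SU_z\le \gamma(SU_x+SU_y+SU_z)=\gamma\,SU$ with the claimed $\gamma$; tightness I would exhibit by concentrating all users at $(0,0)$ with every $Z_j=0$, which turns every inequality above into an equality for odd $k$ (and gives ratio $2$ for even $k$).

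The only genuinely delicate point is the odd-$k$ case: since the mechanism commits to the asymmetric split ($\tfrac{k+1}{2}$ UAVs to one corner, $\tfrac{k-1}{2}$ to the other) without knowing which corner the obnoxious UAVs ought to favor, the worst instance is exactly the one where the larger batch is stranded at the wrong corner, and I would verify carefully that this produces precisely $\tfrac{2k}{k-1}$ rather than a larger constant. I would also make sure to justify that the $x$- and $y$-domains decouple (from the additive separation of $SU$ at $\alpha=2$), so that the mechanism's diagonal placement of the two UAV batches introduces no cross terms in the analysis.
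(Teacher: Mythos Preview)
Your proposal is correct and follows essentially the same route as the paper: strategyproofness is immediate from the fixed output, and the ratio is obtained by separating coordinates, observing that the optimum drives every UAV to the single better endpoint so that $OPT^x=k\max\{a,b\}$, and bounding $k\max\{a,b\}/(k_1 a+k_2 b)$ by $k/\min\{k_1,k_2\}$. Your treatment of the additive $SU_z$ term is in fact more explicit than the paper's (which simply invokes ``symmetry in every domain''); one tiny slip is the phrase ``the heavier batch of $\tfrac{k-1}{2}$ UAVs'' --- you mean the lighter batch, as you correctly state two lines later.
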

\begin{proof}
In $x$-, and $y$-domains,
the UAVs are located at the endpoints of each domain.
Mechanism \ref{m10} is strategyproof since the locations of UAVs are fixed and are independent of users' reports.
We next prove approximation ratio $\gamma.$ Due to the symmetry in every domain, we only need to consider $x$-domain. Define the number of UAVs deployed at $0$ in $x$-domain as $k_1$, and the number of UAVs deployed at $2A$ as $k_2$ with $k_1+k_2=k$. According to Mechanism \ref{m10}, $k_1=k_2=\frac{k}{2}$ if $k$ is even, and $k_1=\frac{k+1}{2}$ and $k_2=\frac{k-1}{2}$ if $k$ is odd.
Note that the optimal social utility is the maximum of
$SU_x=\sum_{j=1}^{k}\sum_{i=1}^{n}w_i (x_i-X_j)^2$,
by choosing $X_1, \cdots, X_k$ jointly, and the approximation ratio is
\begin{align}
	&\gamma=\frac{\max_{X_1,\dots,X_k}\sum_{j=1}^{k}\sum_{i=1}^{n} w_i(x_i-X_j)^2}{k_1\sum_{i=1}^{n} w_i x_i^2+k_2\sum_{i=1}^{n} w_i (x_i-2A)^2}.\nonumber
\end{align}
If weighted mean  $\bar{x}=\frac{\sum_{i=1}^n w_i x_i}{\sum_{i=1}^n w_i}\geq A,$ we have
\begin{align}
	\gamma=&\frac{k\sum_{i=1}^{n} w_i x_i^2}{k_1\sum_{i=1}^{n} w_i x_i^2+k_2\sum_{i=1}^{n} w_i (x_i-2A)^2}\nonumber\\
=&\frac{k}{k_1+k_2\frac{\sum_{i=1}^{n} w_i (x_i-2A)^2}{w_i x_i^2}}\leq \frac{k}{k_1}.\nonumber
\end{align}
If weighted mean  $\bar{x}=\frac{\sum_{i=1}^n w_i x_i}{\sum_{i=1}^n w_i}< A,$ we have
\begin{align}
	\gamma=&\frac{k\sum_{i=1}^{n} w_i (x_i-2A)^2}{k_1\sum_{i=1}^{n} w_i x_i^2+k_2\sum_{i=1}^{n} w_i (x_i-2A)^2}\nonumber\\
=&\frac{k}{k_1\frac{w_i x_i^2}{\sum_{i=1}^{n} w_i(x_i-2A)^2}+k_2}\leq \frac{k}{k_2}.\nonumber
\end{align}

Therefore, we have \[\gamma=\max\{\frac{k}{k_2}, \frac{k}{k_1}\}=\left\{ \begin{array}{ll}
	2, & \mbox{ if }k\mbox{ is even};\\
	\frac{2k}{k-1}, & \mbox{ if }k\mbox{ is odd}.\end{array} \right.\]
\end{proof}
On the other hand, besides the obnoxious multi-UAV placement game, we can also consider the multi-UAV placement game including  $k\geq 2$ of favorable UAVs, where each user of type 1 wants to minimize his cost to enjoy wireless service provided by the closest one out of $k$ UAVs. Thus, the service cost of user $i$ should be
\begin{align}
	c_i(f(\textbf x,\textbf y),(x_i,y_i))
=\min_{j\leq k}w_i ((x_i-X_j)^2+(y_i-Y_j)^2+Z_j^2).\nonumber
\end{align}
The optimal solution is not strategyproof as the optimal solution with one favorable UAV in Section \ref{sectionflg} is not strategyproof.
We want to design a strategyproof mechanism.

We propose the percentile mechanism for $k$ cooperative UAVs, which is inspired by the weighted median Mechanism \ref{m2}
and paper \cite{sui2013analysis}.
Denote $W=w_1+w_2+\dots+w_n$.
Define a new sequenced user number set $\Lambda_x$ in $x$-domain,
\begin{align}
\Lambda_x&=\{\lambda_{x,1},\lambda_{x,2},\dots,\lambda_{x,W}\}\nonumber\\
&=\{\underbrace{x_{1},\dots,x_{1}}_{w_{1}},\underbrace{x_{2},\dots,x_{2}}_{w_{2}},\dots,\underbrace{x_{n},\dots,x_{n}}_{w_{n}}\},\nonumber
\end{align}
where $\lambda_{x,1}$ satisfies that $\lambda_{x,i}\leq \lambda_{x,i+1},$ for any $i\geq 1.$
Basically, we rescale weights to be positive integers and partition each user $i$ of weight $w_i$ into $w_i$ users of unit weight.
Denote $p_j=\frac{j}{k+1}$ with $j=1,2,\dots, k$ as the percentile for $k$ UAVs.
Given a reported location profile $(\textbf x,\textbf y),$ our percentile mechanism locates $x$-location of the $j$-th UAV by selecting the $p_j$-percentile of the ordered projection of $\Lambda_x$ in the $x$-domain as location of UAV $j$ in $x$-domain. Formally,
\begin{align}
&(X_1,X_2,\dots,X_k)\nonumber\\
=&(\lambda_{x,\lfloor (W-1)p_1\rfloor+1}, \lambda_{x,\lfloor (W-1)p_2\rfloor+1}, \dots, \lambda_{x,\lfloor (W-1)p_k\rfloor+1}).\nonumber
\end{align}
The locations of the $j$-th UAV in $y$- and $z$-domians follow the same percentile strategy.
Actually, Mechanism \ref{m2} for $k=1$ UAV can be considered as a special case of percentile
mechanism with $p_1 = 1/2$ . We can also follow the strategyproof proof of Mechanism  \ref{m2} to prove that this percentile mechanism is strategyproof.

We illustrate multi-UAV placement  strategy by using an example of
14 users and 3 UAVs with each $w_i=1$. With $ (p_1, p_2, p_3) = (0.25, 0.5,0.75)$, the percentile
mechanism locates  $X_1$ for the first UAV to the $x$-coordinate of the  fourth ordered  user's location
(due to $\lfloor 0.25\times(14-1)\rfloor+1=4$),
$X_2$ for the second UAV to the $x$-coordinate of the seventh ordered user's location (due to $\lfloor 0.5\times(14-1)\rfloor+1=7$)
and $X_3$ for the last UAV to  the $x$-coordinate of the tenth ordered user' location (due to $\lfloor 0.75\times(14-1)\rfloor+1=10$).

\section{Conclusions}

We studied the algorithmic game theory problem to determine the final deployment location of a UAV in a 3D space, by learning selfish users' truthful locations and preferences.
To minimize the social cost in the UAV placement game, we designed the  strategyproof mechanism  with approximation ratio $2^{(3\alpha-4)/2}$, as compared to the social optimum with full information.
We also studied the obnoxious UAV placement game to maximize the social utility of a group of interfered users and proposed a strategyproof mechanism with approximation ratio $5\times 2^{(\alpha-2)/2}$.
Besides the worst-case analysis,
we proved that the empirical performances of the proposed
mechanisms improve with the number of users.
Moveover, we studied the dual-preference UAV placement game for the coexistence of the two groups of users,  and proposed a strategyproof mechanism with approximation ratio $2^{3\alpha/2}$.
Finally, We  extended the three placement games to include more than one UAV by designing strategyproof mechanisms and proving their approximation ratios.

In the future, we will study the dual-preference UAV placement game and consider user $i$ can misreport his weight $w_i.$
We will further consider randomized strategyproof mechanism designs in the three kinds of UAV placement games.




\bibliographystyle{unsrt}

%

\begin{IEEEbiography}[{\includegraphics[width=1in,height=1.42in,clip,keepaspectratio]{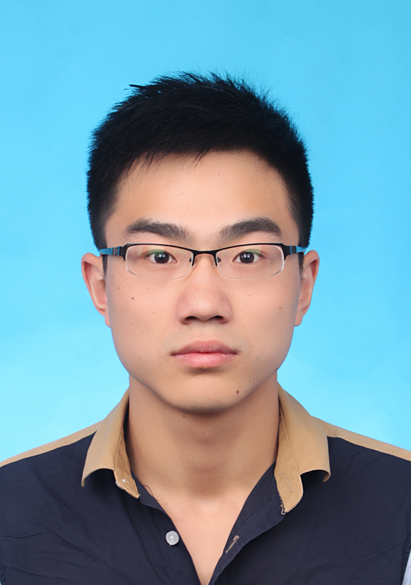}}]{Xinping Xu.pdf}
 (S'15) received the BS degree from the Department of Mathematics, Nanjing University, Nanjing, China, in 2015.
 He was a visiting research student in the Department of Computer Science, City University of Hong Kong for the period from 1 March 2019 to 30 April 2019.
 He is currently a PhD candidate in Engineering Systems and Design pillar at Singapore University of Technology and Design, Singapore.
 His research interests include algorithmic game theory, mechanism design, and network economics.

\end{IEEEbiography}

\begin{IEEEbiography}[{\includegraphics[width=1in,height=1.42in,clip,keepaspectratio]{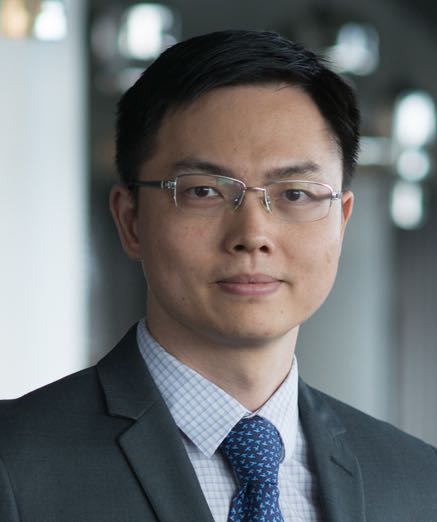}}]{Lingjie Duan.pdf}
(S'09-M'12-SM'17) received the Ph.D. degree from the Chinese University of Hong Kong in 2012. He is an Associate Professor of Engineering Systems and Design with the Singapore University of Technology and Design (SUTD). His research interests include network economics and game theory, cognitive communications and cooperative networking, and energy harvesting wireless communications. He is an Editor of IEEE Transactions on Wireless Communications and IEEE Communications Surveys and Tutorials.

\end{IEEEbiography}

\begin{IEEEbiography}[{\includegraphics[width=1in,height=1.42in,clip,keepaspectratio]{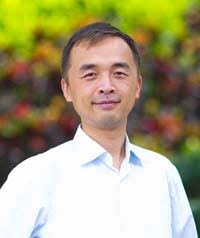}}]{Minming Li.pdf}
 (M'13-SM'15) received the BE and PhD degrees from the Department of Computer Science and Technology, Tsinghua University, Beijing, China, in 2002 and 2006, respectively. He is currently an associate professor in the Department of Computer Science, City University of Hong Kong, Hong Kong SAR. His research interests include algorithmic game theory, algorithm design and analysis, and combinatorial optimization.

\end{IEEEbiography}

\newpage

\begin{appendices}
\section{Lemma \ref{l1} and its proof}\label{app_1}
\begin{lemma}\label{l1}
If $C, D\geq 0$ and $E\geq 2,$ $(C+D)^E\geq C^E+D^E.$
\end{lemma}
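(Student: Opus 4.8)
The claim is that for $C,D\ge 0$ and $E\ge 2$, we have $(C+D)^E\ge C^E+D^E$. The plan is to reduce to a one-variable inequality by homogeneity and then use monotonicity (or convexity) in the single remaining variable. If $C=D=0$ the statement is trivial, so assume without loss of generality that $C+D>0$; by scaling both sides by $(C+D)^{-E}$ (both sides are homogeneous of degree $E$), it suffices to prove the inequality when $C+D=1$. Writing $C=t$ and $D=1-t$ with $t\in[0,1]$, the claim becomes $1\ge t^E+(1-t)^E$ for all $t\in[0,1]$ and all $E\ge 2$.

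First I would handle the boundary cases $t=0$ and $t=1$, where the right-hand side equals $1$ and equality holds. For $t\in(0,1)$, the key observation is that $0<t<1$ implies $t^E\le t^2\le t$ and $(1-t)^E\le (1-t)^2\le(1-t)$, using that $s\mapsto s^E$ is nondecreasing on $[0,1]$ and $E\ge 2$ so that $s^E\le s^2$ there. Adding these two inequalities gives $t^E+(1-t)^E\le t+(1-t)=1$, which is exactly what we need. Undoing the normalization, set $t=C/(C+D)$; then $t^E+(1-t)^E\le 1$ rearranges to $C^E+D^E\le(C+D)^E$, completing the proof.

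The only mild subtlety — and the one step worth stating carefully rather than asserting — is the elementary fact that $s\in[0,1]$ and $E\ge 2$ imply $s^E\le s$. This follows because $s^{E}=s\cdot s^{E-1}$ and $s^{E-1}\le 1$ when $0\le s\le 1$ and $E-1\ge 1$. There is no real obstacle here; the lemma is a routine convexity/monotonicity fact, and the main point of the writeup is simply to make the normalization step explicit so that the reduction to $t^E+(1-t)^E\le 1$ is clean. An alternative, essentially equivalent route is to fix $D$ and differentiate $\phi(C)=(C+D)^E-C^E-D^E$ in $C$, noting $\phi(0)=0$ and $\phi'(C)=E\big((C+D)^{E-1}-C^{E-1}\big)\ge 0$ since $C+D\ge C\ge 0$ and $x\mapsto x^{E-1}$ is nondecreasing on $[0,\infty)$; I would mention this as a remark but present the normalization argument as the main proof since it avoids any calculus.
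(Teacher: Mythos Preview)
Your proof is correct. It differs from the paper's argument in a useful way: the paper substitutes $t=D/C$ and studies $G(t)=(1+t)^E/(1+t^E)$ on $[0,\infty)$ by computing $G'(t)$, locating the unique critical point at $t=1$, and concluding that the minimum of $G$ is attained at the endpoints $t=0$ and $t\to\infty$, where $G=1$. Your route instead normalizes $C+D=1$ and reduces to $t^E+(1-t)^E\le 1$, which follows immediately from the elementary bound $s^E\le s$ for $s\in[0,1]$ and $E\ge 1$; no calculus is required. Both arguments exploit homogeneity to pass to a single variable, but yours is shorter, avoids the derivative computation, and sidesteps the small edge case $C=0$ that the paper's substitution $t=D/C$ leaves implicit. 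The paper's approach, on the other hand, has the advantage that the same function $G(t)$ and its critical-point analysis are reused verbatim to prove the companion Lemma~\ref{l2} (the upper bound $(C+D)^E\le 2^{E-1}(C^E+D^E)$, obtained from $G(t)\le G(1)=2^{E-1}$), so their calculus pays off twice.
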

\begin{proof}
\begin{align}\label{a53}
&(C+D)^E\geq C^E+D^E \Leftrightarrow
(1+\frac{D}{C})^E\geq 1+(\frac{D}{C})^E\nonumber\\
&\Leftrightarrow  \frac{(1+t)^E}{1+t^E}\geq 1 \mbox{ for } t=\frac{D}{C}.
\end{align}
To prove (\ref{a53}), we denote function $G(t)=\frac{(1+t)^E}{1+t^E}$ for $t\geq 0$ and $E\geq 2.$
We have
\begin{align}
&\frac{dG}{dt}=\frac{(1+t^E)E(1+t)^{E-1}-Et^{E-1}(1+t)^E}{(1+t^E)^2}=0\nonumber\\
&\Rightarrow t^{E-1}=1 \Rightarrow t=1.\nonumber
\end{align}
If $t\in[0,1)$ $\frac{dG}{dt}>0$; if $t\in(1, +\infty)$ $\frac{dG}{dt}<0.$
$G(t)$ obtain its minimum at $t=0$ or $t=+\infty.$
$G(t)\geq \min\{G(0), G(\infty)\}=1.$ Therefore (\ref{a53}) is proven.
\end{proof}

\section{Lemma \ref{l2} and its proof}\label{app_2}
\begin{lemma}\label{l2}
If $C, D\geq 0$ and $E\geq 2,$ $(C+D)^E\leq 2^{E-1}(C^E+D^E).$
\end{lemma}
\begin{proof}
\begin{align}
&(C+D)^E\leq 2^{E-1}(C^E+D^E)\nonumber\\
 \Leftrightarrow &
(1+\frac{D}{C})^E\leq2^{E-1}( 1+(\frac{D}{C})^E)\nonumber\\
\Leftrightarrow  & \frac{(1+t)^E}{1+t^E}\leq2^{E-1} \mbox{ for } t=\frac{D}{C}.\nonumber
\end{align}
By the proof of Lemma 1, $G(t)$ obtain its maximum at $t=1$ and $G(t)\leq G(1)=2^{E-1}.$
\end{proof}

\section{Proof of Theorem \ref{t3}}\label{app_3}
\begin{proof}	
We first look at Mechanism \ref{m1} and have
 \begin{align}\label{a44}
 \mathtt{Ratio.1}=\frac{SC_x+SC_y+\sum_{i=1}^n w_i z_0^2}{OPT_{1,x}+OPT_{1,y}+\sum_{i=1}^n w_i z_0^2}.
 \end{align}
 Due to symmetry, we only need to consider $x$-domain part of $\frac{SC_x}{OPT_{1,x}}$ in (\ref{a44}).
		For $SC_x$ in (\ref{a44}), we have	
\begin{align}
		SC_x&\!=\!\sum_{i=1}^{n}w_i(x_i-x_{med})^2\!=\!\sum_{i=1}^{n}w_i((x_i-\bar{x})\!+\!(\bar{x}-x_{med}))^2 \nonumber\\ &=(\bar{x}-x_{med})^2\sum_{i=1}^{n}w_i+\sum_{i=1}^{n}w_i(x_i-\bar{x})^2\nonumber\\
&=(\bar{x}-x_{med})^2\sum_{i=1}^{n}w_i+OPT_{1,x}.
		\label{a20}
		\end{align}
		We note $OPT_{1,x}$ in (\ref{a20}) does not converge to $0$, since
\begin{align} \E[{\sum_{i=1}^{n}w_i(x_i-\bar{x})^2}]
		\geq &nw_{min}\E[{(x_i-\E(x_1))^2}]\nonumber\\
=&nw_{min}\Var[x_1]>0. \nonumber
		\end{align}
		To prove $\mathtt{Ratio.1}$ in (\ref{a44}) converges in probability towards $1$,  we only need to prove the first term $(\bar{x}-x_{med})^2$ in (\ref{a20}) converges in probability to $0$ as $n$ goes to infinity, since
\begin{align}
&(\bar{x}-x_{med})^2\stackrel{P}{\rightarrow} 0 \nonumber\\
\Rightarrow & |\frac{SC_x}{OPT_{1,x}}-1|=\frac{(\bar{x}-x_{med})^2\sum_{i=1}^{n}w_i}{\sum_{i=1}^{n}w_i(x_i-\bar{x})^2}\nonumber\\
&\leq \frac{(\bar{x}-x_{med})^2nw_{max}}{nw_{min}\Var[x_1]}=\frac{w_{max}}{w_{min}\Var[x_1]}(\bar{x}-x_{med})^2\stackrel{P}{\rightarrow} 0\nonumber\\
\Rightarrow & \mathtt{Ratio.1}\stackrel{P}{\rightarrow} 1,\nonumber
\end{align}
		as $n\rightarrow \infty.$ Then we can show the followings.
		
		Assume $h$ and $H$ are the probability density function and cumulative distribution function of $x_i$, respectively, and the range of $x_i$ is $\eta$. Assume $g$ and $G$ are the probability density function and cumulative distribution function of $w_i$, respectively, and the range of $w_i$ is $\beta$.
		Assume all variables are continuous.	
		The expectation of random variable $\bar{x}$ is
		\begin{align}\label{a42}
		\E[\bar{x}]=&\int_{x_1\in\eta} \dots\int_{x_n\in\eta}\int_{w_1\in\beta} \dots\int_{w_n\in\beta} \nonumber\\ &\frac{\sum_{i=1}^{n}w_ix_i}{\sum_{i=1}^{n}w_i}(\prod_{j=1}^n h(x_j)g(w_j)) dw_1\dots dw_n dx_1 \dots dx_n\nonumber\\
		=&\int_{w_1\in\beta} \dots\int_{w_n\in\beta} \frac{\sum_{i=1}^{n}\E[x_1]w_i}{\sum_{i=1}^{n}w_i}g(w_1)\dots g(w_n)\nonumber\\
& dw_1\dots dw_n=\E[x_1].
		\end{align}
As $n\rightarrow \infty,$ due to
\begin{align}
&\Var[\frac{\sum_{i=1}^{n}w_ix_i}{n}]\rightarrow 0, \Var[\frac{n}{\sum_{i=1}^{n}w_i}]\rightarrow 0,\nonumber\\
&\E[\frac{\sum_{i=1}^{n}w_ix_i}{n}]\rightarrow \E[w_1]\E[x_1],\E[\frac{n}{\sum_{i=1}^{n}w_i}]\rightarrow \frac{1}{\E[w_1]},\nonumber
\end{align}
we have the variance of $\bar{x}$ is
\begin{equation}	\label{a43}	
\Var[\bar{x}] =\Var[\frac{\sum_{i=1}^{n}w_ix_i}{n}\frac{n}{\sum_{i=1}^{n}w_i}]\rightarrow 0.
		\end{equation}
		Therefore, due to (\ref{a42}) and (\ref{a43}), as $n\rightarrow \infty,$
$\bar{x}$ converges in probability towards $\E[x_1]$, i.e.,
\begin{equation}\label{a49}
\bar{x}\stackrel{P}{\rightarrow}\E[x_1].
\end{equation}
		
		On the other hand we consider random variable $x_{med}$ and we assume $n$ is odd. Otherwise, if $n$ is even we can get the same conclusion. From order statistics
, we have the distribution of $x_{med}$,
\[x_{med}(\zeta)=\frac{n!}{(\frac{n-1}{2})!(\frac{n-1}{2})!}h(\zeta)H(\zeta)^{\frac{n-1}{2}}(1-H(\zeta))^{\frac{n-1}{2}}, \zeta\in\eta.\]
		Thus we derive the expectation of $x_{med}$,
		\begin{align} \E[x_{med}]\!=&\!\!\!\int_{\zeta\in\eta}\!\!\!\zeta\frac{n!}{(\frac{n-1}{2})!(\frac{n-1}{2})!}h(\zeta)H(\zeta)^{\frac{n-1}{2}}(1\!-\!H(\zeta))^{\frac{n-1}{2}} d\zeta\nonumber\\
		=&\frac{n!}{((\frac{n-1}{2})!)^2}\int_{\zeta\in\eta}\zeta H(\zeta)^{\frac{n-1}{2}}(1-H(\zeta))^{\frac{n-1}{2}} dH(\zeta)\nonumber\\
		=&\frac{n!}{((\frac{n-1}{2})!)^2}\int_0^1 H^{-1}(\psi) \psi^{\frac{n-1}{2}}(1-\psi)^{\frac{n-1}{2}} d\psi\nonumber\\
=&\int_0^1 H^{-1}(\psi)S_n(\psi) d\psi,\nonumber
		\end{align}
		where $H^{-1}(\psi)$ for $0\leq\psi\leq 1$ is the inverse function of $H(\zeta)$ and $S_n(\psi)$ is a symmetric function satisfying \[S_n(\psi)=\frac{n!}{(\frac{n-1}{2})!(\frac{n-1}{2})!}\psi^{\frac{n-1}{2}}(1-\psi)^{\frac{n-1}{2}}.\]
$S_n(\psi)$ reaches its maximum at $\psi=\frac{1}{2}$ and by using Stirling's approximation
, its maximum is
\begin{align}	\label{a45} S_n(\frac{1}{2})&=\frac{n!}{(\frac{n-1}{2})!(\frac{n-1}{2})!}(\frac{1}{2})^{n-1}\approx \frac{\sqrt{2\pi n}(\frac{n}{e})^n}{2\pi \frac{n-1}{2}(\frac{\frac{n-1}{2}}{e})^{n-1}2^{n-1}}\nonumber\\
&=\frac{\sqrt{2\pi n}}{\pi e}(1+\frac{1}{n-1})^n\rightarrow\frac{\sqrt{2\pi n}}{\pi}\rightarrow \infty,
\end{align}
		as $n\rightarrow \infty$ and its value is $0$ when $\psi\in \{0,1\}.$ By using the property of Beta function 
, we have for any $n$,
\begin{equation}\label{a46}
\int_{\psi=0}^1 S_n(\psi) \,d\psi=\frac{n!}{(\frac{n-1}{2})!(\frac{n-1}{2})!}\frac{(\frac{n-1}{2})!(\frac{n-1}{2})!}{n!}=1.
\end{equation}
		Thus from (\ref{a45}) and (\ref{a46}), we can see that function $\lim\limits_{n\rightarrow \infty}S_n(\psi)$ of $\psi$ is Dirac delta function (see Chapter 6. Generalized Functions in \cite{hassani2000dirac}). Function $\lim\limits_{n\rightarrow \infty}S_n(\psi)$ actually is $\delta_{\frac{1}{2}}(\psi).$ By using the definite integral property of Dirac delta function (see Equation 6.4 in page 160 in \cite{hassani2000dirac}),
		we have	
		\begin{align}\label{a47}
		\lim\limits_{n\rightarrow \infty}\E[x_{med}]=&\lim\limits_{n\rightarrow \infty}\int_0^1 H^{-1}(\psi)S_n(\psi) d\psi\nonumber\\
		=&\int_0^1 H^{-1}(\psi)(\lim\limits_{n\rightarrow \infty}S_n(\psi)) d\psi\nonumber\\
=&\int_0^1 H^{-1}(\psi)\delta_{\frac{1}{2}}(\psi) d\psi
=H^{-1}(\frac{1}{2}).
		\end{align}	
		By using a similar method in (\ref{a47}), we derive as $n\rightarrow \infty$, the variance of $x_{med},$
\begin{align}\label{a48}
		\Var[x_{med}]
		&=\E[x_{med}^2]-(\E[x_{med}])^2\nonumber\\
		&=\int_{\psi=0}^1 (H^{-1}(\psi))^2 S_n(\psi) \,d\psi-(\E[x_{med}])^2\nonumber\\
		&\rightarrow  (H^{-1}(\frac{1}{2}))^2-(H^{-1}(\frac{1}{2}))^2=0.
		\end{align}
		Therefore, due to (\ref{a47}) and (\ref{a48}), as $n\rightarrow \infty,$
\begin{equation}\label{a50}
x_{med}\stackrel{P}{\rightarrow}H^{-1}(\frac{1}{2}).
\end{equation}

		Since all $x_i$ follow symmetric distribution, it holds that $\E[x_1]=H^{-1}(\frac{1}{2})$, which means the mean of $x_i$ is equal to the median of $x_i$.
	Due to (\ref{a49}), (\ref{a50}) and
		\[|\bar{x}-x_{med}|\leq |\bar{x}-\E[x_1]|+|x_{med}-H^{-1}(\frac{1}{2})|,\]
		we have proved  $|\bar{x}-x_{med}|\stackrel{P}{\rightarrow}0,$ as $n\rightarrow \infty$.
By checking (\ref{a20}), we have $\frac{SC_x}{OPT_{1,x}}\stackrel{P}{\rightarrow}1,$ as $n\rightarrow \infty.$ Therefore $\mathtt{Ratio.1}\stackrel{P}{\rightarrow}1,$
		Similarly, we can prove $\mathtt{Ratio.2}$ of Mechanism \ref{m2} converges in probability towards $1$, as $n\rightarrow \infty.$
\end{proof}

\section{Proof of Theorem \ref{t4}}\label{app_4}
\begin{proof}
We only consider $x$-domain location  to prove strategyproofness, as $y$-domain can be analyzed similarly.
Assume, without loss of generality, that $\sum_{x_i\in [0,A)}w_i\leq\sum_{x_i\in [A,2A]}w_i$. Thus, UAV's $x$-location of Mechanism \ref{m4} is $x=0$.
	We can see that any user in $[0,A)$ prefers  $x=2A$ and any user in $[A,2A]$ prefers $x=0.$ Any user in $[A,2A]$ is not willing to misreport his $x$-domain location, while any user in $[0,A)$  can not change  the relationship that $\sum_{x_i\in [0,A)}w_i\leq\sum_{x_i\in [A,2A]}w_i$ by misreporting his $x$-domain location.
	Thus, Mechanism \ref{m7}  is strategyproof. Next, we prove approximation ratio $\gamma.$

To obtain the approximation ratio, we need to let $z_0=0$ first.
Without loss of generality, assume that
\begin{align}
&\sum_{i:x_i\in [0,A)}\!w_i>\!\sum_{i:x_i\in [A,2A]}w_i,
\sum_{i:y_i\in [0,B)}\!w_i>\!\sum_{i:y_i\in [B,2B]}w_i.
\label{a10}
\end{align}
The other cases can be analyzed similarly.
Given condition (\ref{a10}), Mechanism \ref{m4} selects that $x=2A$ and $y=2B.$
Since the optimal solution can only select $x=0$ or $2A$ and $y=0$ or $2B$, to obtain the largest approximation ratio, the optimal solution must be $x_{opt}=0$ and $y_{opt}=0.$
By Lemma 2,
the optimal social utility  can be described as
\begin{align}
&OPT_2(\textbf x, \textbf y)=\sum_{i=1}^n w_i (x_i^2+y_i^2)^{\alpha/2}\nonumber\\
\leq &
\sum_{i=1}^n w_i 2^{\alpha/2-1}(x_i^\alpha+y_i^\alpha)\nonumber\\
=&2^{\alpha/2-1}(\sum_{i:x_i\in [0,A)} w_i x_i^\alpha+\sum_{i:x_i\in (A,2A]} w_i x_i^\alpha)\nonumber\\
&2^{\alpha/2-1}(\sum_{i:y_i\in [0,B)} w_i y_i^\alpha+\sum_{i:y_i\in (B,2B]} w_i y_i^\alpha)\nonumber\\
\leq & 2^{\alpha/2-1}(\sum_{i:x_i\in [0,A)} w_i A^\alpha+\sum_{i:x_i\in [0,A)} w_i (2A)^\alpha)\nonumber\\
 &+ 2^{\alpha/2-1}(\sum_{i:y_i\in [0,B)} w_i B^\alpha+\sum_{i:x_i\in [0,B)} w_i (2B)^\alpha)\nonumber\\
 =& 2^{\alpha/2-1} (5A^\alpha\sum_{i:x_i\in [0,A)} w_i +5B^\alpha\sum_{i:y_i\in [0,B)} w_i)
\label{a11}
\end{align}
Under Mechanism \ref{m4}, by Lemma 1, the social utility at $x=2A$ and $y=2B$ is
\begin{align}
&SU(f,(\textbf{x},\textbf y))
=\sum_{i=1}^n w_i ((2A-x_i)^2+(2B-y_i)^2)^{\alpha/2}\nonumber\\
\geq&  \sum_{i=1}^n w_i ((2A-x_i)^\alpha+(2B-y_i)^\alpha)\nonumber\\
=& \sum_{i: x_i\in [0,A)} w_i (2A-x_i)^\alpha+ \sum_{i: x_i\in [A,2A])} w_i (2A-x_i)^\alpha\nonumber\\
&+\sum_{i: y_i\in [0,B)} w_i (2B-y_i)^\alpha+ \sum_{i: y_i\in [B,2B])} w_i (2B-y_i)^\alpha\nonumber\\
\geq & 0+\sum_{i: x_i\in [0,A))} w_i A^\alpha+0+\sum_{i: y_i\in [0,B))} w_i B^\alpha.
\label{a12}
\end{align}
To determine the maximum approximation ratio $\gamma$ for the worst-case, we want to increase the optimal social utility in (\ref{a11}) and reduce the social utility of Mechanism \ref{m4} in (\ref{a12}). We  set $x_i=A$ for all $x_i\in [0,A)$ and $x_i=2A$ for all $x_i\in [A,2A]$.
by comparing  (\ref{a11}) and (\ref{a12}), we have
$OPT_{2,x}(\textbf{x}) \leq 2^{(\alpha-2)/2} 5SU_x(f,\textbf{x}).$
Hence, we conclude $\gamma\leq5\times 2^{(\alpha-2)/2}$ for Mechanism \ref{m4}.
\end{proof}
\section{Proof of Theorem \ref{t8}}\label{app_5}
\begin{proof}
We first prove the strategyproofness.
	We only consider $x$-domain, as $y$-domains can be analyzed similarly.
	Assume, without loss of generality, that $|R_{1}|+|L_{2}|<|R_{2}|+|L_{1}|$. UAV location of Mechanism \ref{m7} is $x=0$.
	We can see that any user in $R_1\cup L_2$ prefers  $x=2A$ and any user in $R_2\cup L_1$ prefers $x=0.$ Any user in $R_{2}\cup \ L_{1}$ has no incentive to misreport his $x$-domain location and preference type, and any user in $R_{1}\cup L_{2}$  can not change  the relationship   that $|R_{1}|+|L_{2}|<|R_{2}|+|L_{1}|$ by misreporting his $x$-domain location or preference type.
	Thus, Mechanism \ref{m7}  is strategyproof. Next, we prove the approximation ratio.
	
Assume, without loss of generality, that $|R_{1}|+|L_{2}|<|R_{2}|+|L_{1}|$ and $|\bar R_{1}|+|\bar L_{2}|<|\bar R_{2}|+|\bar L_{1}|$. UAV location of Mechanism \ref{m7} is $x=0$ and $y=0$.
To obtain the approximation ratio, we should let $z_0=0$ first.
By Lemma 2, the optimal utility can be described as
\begin{align}\label{a54}
&OPT_3(\textbf x, \textbf y)\nonumber\\
=&\max_{x,y} (\sum_{i:\theta_i=1}((2A)^2-(x_i-x)^2+(2B)^2-(y_i-y)^2)^{\alpha/2}\nonumber\\
&+\sum_{i:\theta_i=2} ((x_i-x)^2+(y_i-y)^2)^{\alpha/2})\nonumber\\
\leq & 2^{\frac{\alpha-2}{2}}\max_{x,y} (\sum_{i:\theta_i=1}((2A)^2-(x_i-x)^2)^{\frac{\alpha}{2}}+((2B)^2\nonumber\\
&-(y_i-y)^2)^{\frac{\alpha}{2}})
+\sum_{i:\theta_i=2} ((x_i-x)^\alpha+(y_i-y)^\alpha))\nonumber\\
=& OPT_{3,x}+OPT_{3,y},
\end{align}
where $OPT_{3,x}=2^{\frac{\alpha-2}{2}}\max_x(\sum_{i:\theta_i=1}((2A)^2-(x_i-x)^2)^{\frac{\alpha}{2}}+\sum_{i:\theta_i=2}(x_i-x)^\alpha)$
and
$OPT_{3,y}=2^{\frac{\alpha-2}{2}}\max_y(\sum_{i:\theta_i=1}((2B)^2-(y_i-y)^2)^{\frac{\alpha}{2}}+\sum_{i:\theta_i=2}(y_i-y)^\alpha).$
By Lemma 1, the social utility in  Mechanism \ref{m7} is
\begin{align}\label{a55}
&SU(f,(\textbf x, \textbf y))\nonumber\\
=&\sum_{i:\theta_i=1}((2A)^2-x_i^2+(2B)^2-y_i^2)^{\alpha/2}
+\sum_{i:\theta_i=2} (x_i^2+y_i^2)^{\alpha/2}\nonumber\\
\geq & \sum_{i:\theta_i=1}(((2A)^2-x_i^2)^{\frac{\alpha}{2}}+((2B)^2-y_i^2)^{\frac{\alpha}{2}})
+\sum_{i:\theta_i=2} (x_i^\alpha+y_i^\alpha)\nonumber\\
=& SU_x+SU_y,
\end{align}
where $SU_x=\sum_{i:\theta_i=1}((2A)^2-x_i^2)^{\frac{\alpha}{2}}+\sum_{i:\theta_i=2}x_i^\alpha$
and
$SU_y=\sum_{i:\theta_i=1}((2B)^2-y_i^2)^{\frac{\alpha}{2}}+\sum_{i:\theta_i=2}y_i^\alpha.$
We consider the case that $x\in [0,A]$ for the optimal UAV location.
	By checking the value ranges of $x_i$ and $x$,  $OPT_{3,x}$ satisfies  that
\begin{align}\label{a56}
&2^{-\frac{\alpha-2}{2}}OPT_{3,x}
=\max_x(\sum_{i:i\in R_1\cup L_1}((2A)^2-(x_i-x)^2)^{\frac{\alpha}{2}}\nonumber\\
&+\sum_{i:i\in R_2\cup L_2}(x_i-x)^\alpha)\nonumber\\
\leq &
\sum_{i:i\in R_2}(2A)^\alpha+\sum_{i:i\in L_2}A^\alpha+\sum_{i:i\in L_1}(2A)^\alpha+\sum_{i:i\in R_1}(2A)^\alpha\nonumber\\		=&|R_2|(2A)^\alpha+|L_2|A^\alpha+|L_1|(2A)^\alpha+|R_1|(2A)^\alpha\nonumber\\
		=&(|R_2|(2A)^\alpha+|L_1|(2A)^\alpha)+(|L_2|A^\alpha+|R_1|(2A)^\alpha)\nonumber\\
		\leq&(|R_2|(2A)^\alpha+|L_1|(2A)^\alpha)+(|L_2|(2A)^\alpha+|R_1|(2A)^\alpha)\nonumber\\
	\leq &(|R_2|(2A)^\alpha+|L_1|(2A)^\alpha)+(|R_2|(2A)^\alpha+|L_1|(2A)^\alpha)\nonumber\\
		=&2|R_2|(2A)^\alpha+2|L_1|(2A)^\alpha.
\end{align}
Then we consider the other case that $x\in (A,2A]$.
	By checking the value ranges of $x_i$ and $x$,  $OPT_{3,x}$ satisfies that
\begin{align}\label{a57}
&2^{-\frac{\alpha-2}{2}}OPT_{3,x}\nonumber\\
\leq &
\sum_{i:i\in R_2}A^\alpha+\sum_{i:i\in L_2}(2A)^\alpha+\sum_{i:i\in L_1}(2A)^\alpha+\sum_{i:i\in R_1}(2A)^\alpha\nonumber\\		=&|R_2|A^\alpha+|L_2|(2A)^\alpha+|L_1|(2A)^\alpha+|R_1|(2A)^\alpha\nonumber\\
		=&(|R_2|A^\alpha+|L_1|(2A)^\alpha)+(|L_2|(2A)^\alpha+|R_1|(2A)^\alpha)\nonumber\\
		\leq&(|R_2|A^\alpha+|L_1|(2A)^\alpha)+(|L_1|(2A)^\alpha+|R_2|(2A)^\alpha)\nonumber\\
=&|R_2|(A^\alpha+(2A)^\alpha)+ 2|L_1|(2A)^\alpha.
\end{align}
For $SU_x,$ we have
\begin{align}\label{a58}
SU_x=&\sum_{i:i\in R_1\cup L_1}((2A)^2-x_i^2)^{\frac{\alpha}{2}}+\sum_{i:i\in R_2\cup L_2}x_i^\alpha\nonumber\\
\geq&
\sum_{i\in R_2}A^\alpha+0+\sum_{i\in L_1}(4A^2-A^2)^{\frac{\alpha}{2}}+0\nonumber\\
		=&|R_2|A^\alpha+3^{\frac{\alpha}{2}}|L_1|A^\alpha.
\end{align}
Therefore, the approximation ratio is
\begin{align}
\gamma=&\frac{OPT_3(\textbf x, \textbf y)}{SU(f,(\textbf x, \textbf y))}\leq \frac{OPT_{3,x}+OPT_{3,y}}{SU_x+SU_y}= \frac{OPT_{3,x}}{SU_x}\nonumber\\
\leq&2^{\frac{\alpha-2}{2}}\frac{2|R_2|(2A)^\alpha+2|L_1|(2A)^\alpha}{|R_2|A^\alpha+3^{\frac{\alpha}{2}}|L_1|A^\alpha}\leq 2^{3\alpha/2},
\end{align}
where the first inequality is due to (\ref{a54}) and (\ref{a55}), and the second inequality is due to (\ref{a56}), (\ref{a57}) and (\ref{a58}).
\end{proof}

\section{Proof of Theorem \ref{t110}}\label{app_6}
\begin{proof}

First, we prove Mechanism \ref{m110} is strategyproof.
In Mechanism \ref{m110}, there are only two choices for the two
UAVs' $x$-locations: $(0,2A)$ and $(2A,0).$ By comparing the utility of user $i$ in  eight different sets $Q_1, \cdots, Q_8$ for points $(0,2A)$ and $(2A,0)$, we can obtain user $i$'s preference towards candidates $(0,2A)$ and $(2A,0)$.
Any user $i$ in $Q_2\cup Q_7$ prefers the locations of two UAVs to be $(0,2A)$ and any user $i$ in $Q_3\cup Q_6$ prefers the locations of two UAVs to be $(2A,0).$
Any user $i$ in $Q_1\cup Q_4\cup Q_5 \cup Q_8$ are indifferent.

Assume, without loss of generality, that $|Q_2|+|Q_7|\geq |Q_3|+|Q_6|$. $(X_1,X_2)$ in Mechanism \ref{m110} should be $(0,2A)$.
Any user $i$ in $Q_2\cup Q_7$ is not willing to misreport his location $x_i$ or his preference $(\theta_i^1,\theta_i^2)$, since $(X_1,X_2)=(0,2A)$ is already the best choice for him.
Misreporting location $x_i$ and preference $(\theta_i^1,\theta_i^2)$ by any user $i$ in $Q_3\cup Q_6$ does not change the fact that $|Q_2|+|Q_7|\geq |Q_3|+|Q_6|$ and $(X_1,X_2)$ is still $(0,2A).$
Any user $i$ in $Q_1\cup Q_4\cup Q_5 \cup Q_8$ has no incentive to change his location $x_i$ or his preference $(\theta_i^1,\theta_i^2)$, since $(X_1,X_2)=(0,2A)$ or  $(X_1,X_2)=(2A,0)$ are the same for him.

For the other case $|Q_2|+|Q_7|< |Q_3|+|Q_6|$, the same conclusion can be drawn.
Therefore, misreporting location $x_i$ and preference $(\theta_i^1,\theta_i^2)$ by any user $i$ does not increase his utility $u_i$ and Mechanism \ref{m110} is strategyproof in the $x$-domain. The strategyproof result can be similarly proved for the $y$-domain and $z$-domain.

Next, we prove approximation ratio $\gamma.$ Without loss of generality, assume that $|Q_2|+|Q_7|\geq |Q_3|+|Q_6|$ and thus $(X_1,X_2)=(0,2A)$ for $x$-domain in this mechanism.
For the optimal utility, due to $|Q_2|+|Q_7|\geq |Q_3|+|Q_6|$,
we have
\begin{align}\label{a40}
& OPT_{3,x}=\max\limits_{(X_1,X_2)}\sum_{i\in N}((X_1-x_i)^2+(X_2-x_i)^2)\nonumber\\
\leq & n\times ((2A)^2+(2A)^2)=8A^2n=8A^2\sum_{i=1}^8|Q_i|\nonumber\\
\leq & 8A^2(|Q_1|+|Q_4|+|Q_5|+|Q_8|)+16A^2(|Q_2|+|Q_7|).
\end{align}
For the social utility of Mechanism \ref{m110}, by (\ref{a131}), we have
\begin{align}\label{a41}
&SU_x(0,1)=\sum_{i\in Q_4\cup Q_8}(x_i^2+(2A-x_i)^2)\nonumber\\
&+\sum_{i\in Q_1\cup Q_5}(8A^2-x_i^2-(2A-x_i)^2)\nonumber\\
&+\sum_{i\in Q_2\cup Q_6}((2A)^2-x_i^2+(2A-x_i)^2)\nonumber\\
&+\sum_{i\in Q_3\cup Q_7}(x_i^2+(2A)^2-(2A-x_i)^2)\nonumber\\
\geq &2A^2 (|Q_4|+|Q_8|)+4A^2 (|Q_1|+|Q_5|)+(2A)^2|Q_2|\nonumber\\
&+0\times|Q_6|+0\times|Q_3|+(2A)^2\times|Q_7|\nonumber\\
\geq &2A^2 (|Q_4|+|Q_8|+|Q_1|+|Q_5|)+4A^2(|Q_2|+|Q_7|).
\end{align}
By (\ref{a40}) and (\ref{a41}), we have the approximation ratio,
$\gamma=\frac{OPT_{3,x}}{SU_x(0,2A)}\leq 4.$
\end{proof}

\end{appendices}

\end{document}